\documentclass{ieeeaccess}

\usepackage{amsfonts,amsthm,braket,amsmath,amssymb,algorithm2e}
\usepackage{graphicx}
\usepackage{textcomp}
\usepackage[colorlinks, citecolor=blue, linkcolor=blue,urlcolor=blue]{hyperref}

\newtheorem{theorem}{Theorem}[]

\newtheorem{lemma}[theorem]{Lemma}

\newcommand\bket[1]{|\mathfrak{#1}\rangle}
\newcommand\bbra[1]{\langle\mathfrak{#1}|}

\newcommand \kb[1]{ \ket{#1}\bra{#1} }
\newcommand \bkb[1]{ \bket{#1}\bbra{#1} }
\newcommand \bk[1]{ \braket{#1|#1} }

\newcommand \pdouble { p_{\texttt{double}} }
\newcommand \pctrl[1] { p_{\texttt{CTRL:} #1} }
\newcommand \pcreate[1] { p_{\texttt{create:} #1} }

\newcommand\sA[0]{\mathrm{A}}
\newcommand\sB[0]{\mathrm{B}}
\newcommand\sE[0]{\mathrm{E}}
\newcommand\sZ[0]{\mathrm{Z}}
\newcommand\sX[0]{\mathrm{X}}
\newcommand\sF[0]{\mathrm{F}}
\newcommand\sR[0]{\mathrm{R}}
\newcommand\sx[0]{\mathrm{x}}
\newcommand\pr[0]{\mathrm{Pr}}

\newlength{\xfigwd}
\setlength{\xfigwd}{2pt}

\RestyleAlgo{boxruled}
\LinesNumbered

\begin{document}

\history{}
\doi{10.1109/TQE.2023.3261262}

\title{Security Proof Against Collective Attacks
for an Experimentally Feasible Semiquantum Key Distribution Protocol}

\author{\uppercase{Walter O. Krawec}\authorrefmark{1},
\uppercase{Rotem Liss}\authorrefmark{2,3}, and \uppercase{Tal Mor}\authorrefmark{2}}
\address[1]{Department of Computer Science and Engineering,
  University of Connecticut, Storrs CT USA (email: walter.krawec@uconn.edu)}
\address[2]{Computer Science Department, Technion, Haifa 3200003,
Israel}
\address[3]{D\'epartement IRO, Universit\'e de Montr\'eal,
Montr\'eal, Qu\'ebec H3C 3J7, Canada}
\tfootnote{The work of Walter O.\ Krawec was supported in part
by the National Science Foundation under Grant~1812070.
The work of Rotem Liss and Tal Mor was supported in part
by the Research and Technology Unit of the Israeli Ministry of Defense
and in part by the Helen Diller Quantum Center at the Technion.
The work of Rotem Liss was supported in part
by the Canada Research Chair Program.}

\begin{abstract}
Semiquantum key distribution (SQKD) allows two parties (Alice and Bob)
to create a shared secret key, even if one of these parties
(say, Alice) is classical. However, most SQKD protocols suffer
from severe practical security problems
when implemented using photons.
The recently developed ``Mirror protocol'' \cite{mirror17}
is an experimentally feasible SQKD protocol overcoming those drawbacks.
The Mirror protocol was proven robust
(namely, it was proven secure against
a limited class of attacks including all noiseless attacks),
but its security in case some noise is allowed
(natural or due to eavesdropping)
has not been proved yet.
Here we prove security of the Mirror protocol
against a wide class of quantum attacks (the ``collective attacks''),
and we evaluate the allowed noise threshold and the resulting key rate.
\end{abstract}

\begin{keywords}
Quantum information, quantum key distribution (QKD), security
\end{keywords}

\titlepgskip=-15pt

\maketitle

\section{Introduction}\label{sec_intro}
Quantum key distribution (QKD) protocols~\cite{bb84} make it possible
for two parties, Alice and Bob, to generate a secret shared
key. This key is information-theoretically secure against any possible
attack that can be applied by an all-powerful adversary Eve
limited only by the laws of physics.

Semiquantum key distribution (SQKD)~\cite{cbob07} allows Alice and Bob
to achieve the same goal even if one of them is \textit{classical} in nature.
For example, an SQKD protocol can allow a classical Alice and
a quantum Bob to generate a secret shared key, where Alice is only
allowed to perform operations
in the computational basis $\{\bket{0}, \bket{1}\}$
and reflect qubits that go through her laboratory untouched.
Previously suggested SQKD protocols include
``QKD with Classical Bob''~\cite{cbob07,sqkd09},
``QKD with Classical Alice''~\cite{calice09,calice09comment},
and many others
(e.g.,~\cite{LC2008,SDL2013,YYLH2014,mediated15,ZQZM15,zhang2020single,rong2021mediated,rong2020two,rong2020semi}).

SQKD protocols use the notion of ``classical operations''
performed by a ``classical party''.
However, in the 15 years since the publication of the original paper
introducing SQKD protocols~\cite{cbob07}, we noticed that the term
``classical party'' sometimes causes confusion:
in other hybrid quantum-classical protocols
described in the literature (e.g.,~\cite{mahadev18,reichardt2013classical}),
the term classical operations is kept only
to operations performed on \textit{classical} bits,
and it is implicitly or explicitly assumed that all classical parties
have no access to quantum states (e.g., qubits)
and cannot perform any operation on them.
On the other hand, classical parties in SQKD protocols
\textit{can} perform limited operations on quantum states.

To avoid this confusion, we introduce here the notion of
\textit{CloQ}---\textit{Classical Operations on Quantum Data}.
CloQ protocols involve at least one classical party
(or CloQ party) who is restricted to using
the four classical operations~\ref{cloq_op1}--\ref{cloq_op4} described below
for interacting with a quantum channel.
CloQ protocols have been shown to exhibit highly interesting
theoretical properties; currently, their most well understood application
is SQKD (see~\cite{sqc20} for a recent review),
but CloQ protocols have also been devised
to solve other cryptographic problems, including
secret sharing~\cite{secret-share-1,secret-share-2,secret-share-3},
secure direct
communication~\cite{semi-sdc-1,semi-sdc-2,semi-sdc-3,semi-sdc-4},
identity verification~\cite{semi-ident-1,semi-ident-2},
and private state comparison~\cite{semi-compare}.
CloQ protocols may even be devised in the future for quantum
verification by defining a CloQ variant of QPIP (quantum prover
interactive proofs)~\cite{aharonov2017interactive},
which could allow a CloQ party to verify
quantum computations performed by a fully quantum center (or prover).
Possible generalizations of this idea include verification protocols
for a CloQ verifier and a computationally unbounded prover
(a known concept in complexity theory),
as well as blind verification protocols
where the quantum prover is oblivious to the computations it performs
at the CloQ verifier's request.

The classical party in a CloQ protocol is restricted
to limited classical operations but is capable of performing
these operations on a quantum communication channel.
Such protocols rely on a two-way quantum channel,
which makes security analyses difficult
(similarly to other two-way QKD protocols;
see, e.g.,~\cite{plug_play97,ping_pong02,LM05,renner-twoway13}),
especially in practical and experimental settings allowing
a quantum state to travel from one party to the classical party
and back to the original sender. The classical party is restricted
to using the following classical operations
(see, e.g.,~\cite{cbob07,sqkd09}):
\begin{enumerate}
\item \label{cloq_op1}Preparing a qubit in one of the computational basis
states: $\bket{0}$ or $\bket{1}$.
\item \label{cloq_op2}Measuring a qubit in the computational basis
$\{\bket{0}, \bket{1}\}$.
\item \label{cloq_op3}Ignoring the qubit, letting it pass through their lab
back to the sender undisturbed.
\item \label{cloq_op4}Permuting incoming qubits and returning them to the sender
in a new order, but otherwise undisturbed.
\end{enumerate}

CloQ protocols, and in particular SQKD protocols,
are fascinating from a theoretical point of view
because they attempt to find out ``how quantum'' a protocol must be
to gain an advantage over a classical protocol:
for example, it is impossible to perform secure key distribution
using \textit{only} classical communication
(unless we make computational assumptions),
but SQKD protocols show that one classical party
and one quantum party \textit{can} achieve information-theoretically
secure key distribution.

While the importance of SQKD protocols is clear from a theoretical standpoint,
their practical importance is more subtle.
Since the practical implementation of fully-quantum QKD (e.g.,
BB84) is a well studied problem with numerous high-speed implementations, the
reader may rightly wonder at the practical importance of studying SQKD protocols.
However, there are several advantages to this
study from a practical perspective. First, semi-quantum communication is
a practical technology, as some experimental proofs of concept have been
demonstrated~\cite{mirror_gurevich13,mirror_tamari14,massa2022experimental}.
Second, while these experimental proofs of concept
required hardware similar to their fully-quantum counterparts,
the ability to perform CloQ operations may become cheaper
as technology advances,
so it is important to study alternative implementation methodologies now.
Third, several semi-quantum protocols rely on imbalanced user
capabilities---for example, the fully quantum user can invest
in higher quality equipment, while the classical user can rely
on cheaper devices (e.g., measurement devices with lower efficiency),
leading to interesting use-case scenarios. Fourth, the security proof
methodologies developed for practical SQKD protocols can be translated to other
QKD protocols with potential new insights and countermeasure strategies;
for example, proof techniques developed for practical SQKD
can demonstrate how to compensate for imperfect or
imbalanced hardware capabilities or partial device failures.
Last but not least, if one wants to hide from some of the users the fact
that quantum cryptography is used, the true description of the classical
operations~\ref{cloq_op1}--\ref{cloq_op4} can indeed
hide any hint from such an oblivious party;
after all, also when using classical data one can
either check if the bit is $0/1$ or choose to avoid checking it.
Taken together, not only is the study of SQKD protocols
(and CloQ protocols in general) important from a theoretical standpoint,
but it can also have highly interesting practical implications.

However, while the capabilities of SQKD protocols
in the ideal (perfect-qubit) scenario are now fairly well understood,
and while in principle such protocols could allow simpler devices,
the security and performance of SQKD protocols
under \textit{practical} attacks are yet to be verified.
In fact, as pointed out by~\cite{cbob07comment,cbob07comment_reply},
many existing SQKD protocols are experimentally infeasible:
it is not known how to implement them in a secure way.
Specifically, many SQKD protocols use the SIFT classical operation,
which requires the classical user to first \textit{measure}
the incoming quantum state
in the computational basis $\{\bket{0}, \bket{1}\}$
and then \textit{resend} the measured state back to the quantum user;
the experimental implementations of this operation
are vulnerable to some ``tagging attacks'' described
by~\cite{cbob07comment,cbob07comment_reply,mirror17}.
For solving this problem, an experimentally feasible SQKD protocol
named the ``Mirror protocol'' was introduced by~\cite{mirror17};
see also~\cite{mirror_attack18}, which analyzed a simplified variant
and attacks on it.

Most SQKD protocols have been proven robust:
namely~\cite{cbob07}, if Eve obtains some secret information,
she must cause some errors that may be noticed by Alice and Bob;
equivalently, a protocol is ``robust'' if any attack that induces
no errors, must give Eve no information.
In particular, the Mirror protocol
was proven robust by~\cite{mirror17}.
Proving robustness is a step towards proving security;
proving full security of SQKD protocols is difficult
because these protocols are usually two-way: for example, Bob sends
a quantum state to Alice, and Alice performs a specific classical
operation and sends the resulting quantum state back to Bob.
A few SQKD protocols also have a security
analysis~\cite{cbob_secur15,sqkd_secur16,calice_secur18,sqkd_secur18}
which is usually applicable to an ideal qubit-based description,
but not to the more realistic photon-based description.
So far, the Mirror protocol has not been proven secure.

In this paper we prove security of the Mirror protocol
against collective attacks.
The class of the collective attacks~\cite{BM97a,BM97b,BBBGM02}
is an important and powerful subclass of possible attacks;
the class of the general attacks
(also known as the joint attacks;
see, e.g.,~\cite{mayers01,SP00,BBBMR06,RGK05})
includes all theoretical attacks allowed by quantum physics.
Security against collective attacks is conjectured
(and, in some security notions, proved~\cite{renner_thesis08,CKR09,geat2022})
to imply security against general attacks.
However, some existing security proofs of SQKD protocols
against general attacks may in fact be limited to collective attacks,
because they use de Finetti's theorem and similar techniques
(see~\cite{renner_thesis08,CKR09}) that can directly be applied
only to entanglement-based protocols~\footnote{Applying
de Finetti's theorem and similar techniques
to prepare-and-measure protocols (including SQKD protocols)
is usually easy for one-way QKD protocols,
but it does not necessarily work for two-way protocols.}. In particular,
to use these techniques, one usually requires some reduction
from the two-way protocol to an entanglement-based protocol.
Such reduction techniques are known for certain classes
of two-way protocols \cite{renner-twoway13,guskind2022mediated},
but it is not known how to perform these reductions for all two-way protocols.  In particular, the method of \cite{renner-twoway13} only applies if the protocol exhibits a certain symmetry property which no semi-quantum protocol can have, while the method of \cite{guskind2022mediated} is currently only applicable to mediated semi-quantum protocols in the ideal qubit scenario.
In particular, these previous techniques
do not apply to the Mirror protocol we consider in this work.
Therefore, in this paper we restrict our analysis
to collective attacks.

This paper proves security of the Mirror protocol
under a large class of collective attacks,
which include the ability of Eve to inject \textit{multiple photons}
into the classical user's lab, but not into the quantum user's lab
(attacks of the later kind are left for future analysis,
but we briefly discuss them in the beginning of
Section~\ref{sec_proof}).
In addition, we limit our analysis
to two-mode quantum communication,
leaving more complicated attacks for future research.
We assume Alice's and Bob's devices precisely implement
the needed operations
(most notably, Alice's classical operations described
in Eqs.~\eqref{eq:I-def}--\eqref{eq:S-def}),
and without loss of generality, we assume an all-powerful Eve
controlling all errors and losses in the quantum channel.

We derive an information-theoretic proof of security
against these attacks and simulate the performance of the protocol
in a variety of realistic scenarios, including lossy quantum channels,
compared to the BB84 protocol. Ultimately, our paper shows
that SQKD protocols hold the potential to be secure and feasible
in practice, and not just ``secure in ideal conditions''.
The methods and techniques we present in this work
may also be applicable to security proofs of other SQKD protocols
or even other two-way QKD protocols where users are limited
in some manner in their quantum capabilities.

\section{The Mirror Protocol}\label{sec_mirror}
This section is partially based on~\cite{mirror_attack18}.

For describing the Mirror protocol,
we assume a photonic implementation consisting of two modes:
the mode of the qubit state $\bket{0}$
and the mode of the qubit state $\bket{1}$ (below we call them
``the $\bket{0}$ mode'' and ``the $\bket{1}$ mode'', respectively).
For example, the $\bket{0}$ mode and the $\bket{1}$ mode
can represent two different polarizations or two different time bins.
As elaborated in~\cite{mirror17}, the Mirror protocol can intuitively
be described in terms of photon pulses
that correspond to two distinct time bins,
which means that the classical party (Alice)
can only perform operations on the two distinct time bins
(corresponding to the computational basis $\{\bket{0}, \bket{1}\}$)
and not on their superpositions
(corresponding, for example,
to the Hadamard basis $\{\bket{+}, \bket{-}\}$).

\begin{figure}
\includegraphics[width=0.5\textwidth]{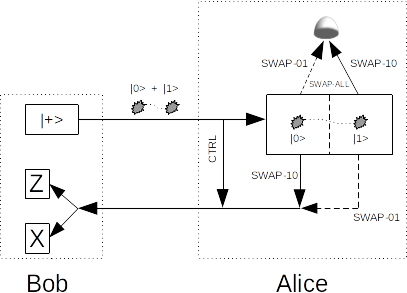}
\caption{A schematic diagram of the Mirror protocol
described in Section~\ref{sec_mirror}.}\label{fig:diagram}
\end{figure}

\subsection{The Single-Photon Case}\label{subsec_single}
We use the Fock space notations: if there is exactly one photon,
the Fock state $\ket{0,1}$
represents one photon in the $\bket{0}$ mode,
and the Fock state $\ket{1,0}$
represents one photon in the $\bket{1}$ mode
(and, thus, our Hilbert space is the qubit space
$\mathrm{Span}\{\ket{0,1}, \ket{1,0}\}$).
We can extend the qubit space to a $3$-dimensional Hilbert space
by adding the Fock ``vacuum state'' $\ket{0,0}$,
which represents an absence of photons.
Similarly, in the Hadamard basis,
we define the Fock state $\ket{0,1}_\sx \triangleq
\frac{\ket{0,1} + \ket{1,0}}{\sqrt{2}}$
(equivalent to the $\bket{+}$ state)
and the Fock state $\ket{1,0}_\sx \triangleq
\frac{\ket{0,1} - \ket{1,0}}{\sqrt{2}}$
(equivalent to the $\bket{-}$ state).

In the Mirror protocol (regardless of the specific implementation),
in each round,
Bob sends to Alice the initial state $\bket{+}_\sB$,
which is equivalent to $\ket{0,1}_{\sx,\sB} \triangleq
\frac{\ket{0,1}_\sB + \ket{1,0}_\sB}{\sqrt{2}}$.
Then, Alice prepares an ancillary state in the initial vacuum state
$\ket{0,0}_{\sA_{\mathrm{anc}}}$ and
chooses \textit{at random} one of the following
four classical operations (defined on any Fock state
she may possibly get, due to Eve's single-photon attacks
possible in this case):
\begin{itemize}
\item $\mathbf{I}$ (CTRL)  Reflect all photons towards Bob,
without measuring any photon. The mathematical description is:
\begin{equation}
I \ket{0,0}_{\sA_{\mathrm{anc}}} \ket{m_1,m_0}_\sB
= \ket{0,0}_{\sA_{\mathrm{anc}}} \ket{m_1,m_0}_\sB.\label{eq:I-def}
\end{equation}
\item $\mathbf{S_1}$ (SWAP-10) Reflect all photons in the
$\bket{0}$ mode towards Bob, and measure all photons in the
$\bket{1}$ mode. The mathematical description is:
\begin{equation}
S_1 \ket{0,0}_{\sA_{\mathrm{anc}}} \ket{m_1,m_0}_\sB
= \ket{{m_1,0}}_{\sA_{\mathrm{anc}}} \ket{0,m_0}_\sB.\label{eq:S1-def}
\end{equation}
\item $\mathbf{S_0}$ (SWAP-01) Reflect all photons in the
$\bket{1}$ mode towards Bob, and measure all photons in the
$\bket{0}$ mode. The mathematical description is:
\begin{equation}
S_0 \ket{0,0}_{\sA_{\mathrm{anc}}} \ket{m_1,m_0}_\sB
= \ket{0,m_0}_{\sA_{\mathrm{anc}}} \ket{m_1,0}_\sB.\label{eq:S0-def}
\end{equation}
\item $\mathbf{S}$ (SWAP-ALL) Measure all photons, without
reflecting any photon towards Bob. The mathematical description is:
\begin{equation}
S \ket{0,0}_{\sA_{\mathrm{anc}}} \ket{m_1,m_0}_\sB
= \ket{m_1,m_0}_{\sA_{\mathrm{anc}}} \ket{0,0}_\sB.\label{eq:S-def}
\end{equation}
\end{itemize}

We note that in the above mathematical description,
Alice measures her ancillary state $\ket{\cdot}_{\sA_{\mathrm{anc}}}$
in the computational basis $\{\bket{0}, \bket{1}\}$
and sends back to Bob the $\ket{\cdot}_\sB$ state.

The states sent from Alice to Bob (without any error, loss,
or eavesdropping) and their interpretations,
depending on Alice's random choice of a classical operation
and on whether Alice detected a photon or not,
are detailed in Table~\ref{table_expected_results}.

\begin{table*}[htpb]
\caption{The state sent from Alice to Bob in the Mirror protocol
without errors or losses, and its interpretation,
depending on Alice's random choice of a classical operation
and on whether Alice detected a photon or not.}
\label{table_expected_results}
\centering
\begin{tabular}{ccccccccc}
\hline\hline
\textbf{Alice's Operation} & &
\textbf{Did Alice Detect a Photon?} & &
\textbf{State Sent to Bob} & &
\textbf{Round Type} & &
\textbf{Raw Key Bit} \\
\hline
CTRL & & no (happens with certainty) & & $\ket{0,1}_{\sx,\sB}$
& & ``test'' & & none \\
\hline
SWAP-10 & & no (happens with probability $\frac{1}{2}$) & &
$\ket{0,1}_\sB$ & & ``raw key'' & & 0 \\
SWAP-10 & & yes (happens with probability $\frac{1}{2}$) & &
$\ket{0,0}_\sB$ & & ``raw key'' & & none \\
\hline
SWAP-01 & & no (happens with probability $\frac{1}{2}$) & &
$\ket{1,0}_\sB$ & & ``raw key'' & & 1 \\
SWAP-01 & & yes (happens with probability $\frac{1}{2}$) & &
$\ket{0,0}_\sB$ & & ``raw key'' & & none \\
\hline
SWAP-ALL & & yes (happens with certainty) & &
$\ket{0,0}_\sB$ & & ``SWAP-ALL'' & & none \\
\hline\hline
\end{tabular}
\end{table*}

\subsection{The Multi-Photon Case}\label{subsec_multi}
Most generally, we need to describe Alice's operation
on a general state, because Eve can attack the state sent
from Bob to Alice. The Fock state $\ket{m_1, m_0}$ represents
$m_1$ indistinguishable photons in the $\bket{1}$ mode
and $m_0$ indistinguishable photons in the $\bket{0}$ mode.
More details about the Fock space notations are given
in~\cite{mirror17}; using these mathematical notations is vital
for describing and analyzing all practical attacks
on a QKD protocol (see~\cite{BLMS00} for details and examples).

The mathematical description of the Mirror protocol
in this multi-photon case remains \textit{identical}
to its description in Subsection~\ref{subsec_single}.
However, in this case, Alice's classical operations are defined
on \textit{any} general Fock state, because Eve's attack can include
any multi-photon pulse.

\subsection{Bob's Final Measurements and Classical Post-Processing}
In both cases described
in Subsections~\ref{subsec_single} and~\ref{subsec_multi},
Bob finally measures the incoming state in a random basis
(either the computational basis $\{\bket{0}, \bket{1}\}$ or
the Hadamard basis $\{\bket{+}, \bket{-}\}$).
We assume here, as is true in most experimental setups, that 
Alice and Bob use detectors and not counters:
namely, their detectors cannot \textit{count}
the number of incoming photons. Therefore, when a detector clicks,
Alice and Bob cannot know whether it detected a single-photon pulse
(a single photon in its measured mode)
or a multi-photon pulse (more than one photon in its measured mode).

After completing all rounds, Alice and Bob
perform \textit{classical post-processing}:
Alice sends over the classical channel
her operation choices (CTRL, SWAP-$x$, or SWAP-ALL;
she keeps $x \in \{01,10\}$ in secret);
Bob sends over the classical channel his basis choices;
and both of them reveal all rounds where they got a loss,
and all measurement results each of them got in all testing rounds
(CTRL, SWAP-ALL, and a random subset of the SWAP-$x$ rounds,
for which Alice also reveals her values of $x \in \{01,10\}$)
and in all mismatched rounds (such as rounds in which
Alice used SWAP-10 and Bob used the Hadamard basis).

In the non-testing rounds,
as detailed in Table~\ref{table_expected_results},
Alice and Bob share the raw key bit 0 if Alice uses
SWAP-10 and detects no photon while Bob measures in the
computational basis and detects a photon (or photons)
in the $\bket{0}$ mode;
similarly, they share the raw key bit 1 if Alice uses
SWAP-01 and detects no photon while Bob measures in the
computational basis and detects a photon (or photons)
in the $\bket{1}$ mode.

Now, Alice and Bob have enough information for computing
all the probabilities they need for finding the key rate
(that are detailed later, in Table~\ref{table_simple_probs}),
so they compute all these probabilities and
deduce the final key rate according to the algorithm in
Subsection~\ref{subsec_algo}.
If the final key rate is negative, they abort the protocol;
otherwise, they perform error correction and privacy amplification
in the standard way for QKD protocols.
At the end of the protocol, Alice and Bob hold an identical final key
that is completely secure against any eavesdropper.

A full description of the Mirror protocol and
a proof of its robustness are both available in~\cite{mirror17}.
An illustration of the Mirror protocol is available as
Fig.~\ref{fig:diagram}.

\section{Security Proof of the Mirror Protocol
Against Collective Attacks}\label{sec_proof}

\begin{table*}[htpb]
\caption{All the probabilities Alice and Bob need to estimate
in order to compute the key rate in Theorem~\ref{thm:main-thm}.}
\label{table_simple_probs}
\centering
\begin{tabular}{lll}
\hline\hline
\textbf{Notation} & \textbf{Definition} & \textbf{Round Type This Occurs}\\
\hline
$\bk{E_0}_\sE$  & Probability that Alice and Bob get raw key bits
$0,0$, respectively & ``raw key''\\
$\bk{E_1}_\sE$ &  Probability that Alice and Bob get raw key bits
$0,1$, respectively  & ``raw key''\\
$\bk{E_2}_\sE$ &  Probability that Alice and Bob get raw key bits
$1,0$, respectively  & ``raw key''\\
$\bk{E_3}_\sE$  & Probability that Alice and Bob get raw key bits
$1,1$, respectively  & ``raw key''\\
$M$  & Probability that both Alice and Bob get raw key bits & ``raw key''\\
$p_{0,+}$  & Probability that Alice gets raw key bit $0$,
and Bob observes $\ket{+}$ & ``raw key'' (with mismatched bases)\\
$p_{1,+}$ & Probability that Alice gets raw key bit $1$,
and Bob observes $\ket{+}$ & ``raw key'' (with mismatched bases)\\
\hline
$p_{+,+}$ & Probability that Bob observes $\ket{+}$ & ``test''\\
$\pctrl{0}$  & Probability that Bob observes $\ket{0,1}$& ``test'' (with mismatched bases)\\
$\pctrl{1}$ & Probability that Bob observes $\ket{1,0}$& ``test'' (with mismatched bases)\\
\hline
$\pdouble$ & Probability that Alice observes a ``double-click'' event
($\ket{1,1}$) & ``SWAP-ALL''\\
$\pcreate{0}$  & Probability that Alice observes $\ket{0,0}$, and Bob
observes $\ket{0,1}$ & ``SWAP-ALL''\\
$\pcreate{1}$  & Probability that Alice observes $\ket{0,0}$, and Bob
observes $\ket{1,0}$ & ``SWAP-ALL''\\
\hline\hline
\end{tabular}
\end{table*}

We now prove security of the Mirror protocol.
For our security proof, we assume that the adversary Eve
is restricted to collective attacks---namely, that Eve attacks
each round in an independent and identical manner,
but she is allowed to postpone the measurement
of her private quantum ancilla until any future point in time.
Beyond this, we will also assume in our security analysis that
Eve is allowed to inject \textit{any} signal into the forward channel
(linking quantum Bob to classical Alice); in the reverse channel,
she is free to perform any quantum unitary probe, but we will assume
that the number of photons returning to Bob is at most one.
That is, Eve is allowed to inject
multiple photons into the channel going to Alice, but on the way back,
only a single photon or no photons at all will be returned to Bob.
This assumption means that Eve may need to remove photons
on the way from Alice to Bob, if she sent multiple photons towards
Alice; in Subsection~\ref{subsec_attacks}
we explain how Eve can perform this attack.

The above assumption (that at most one photon is sent towards Bob)
is made to simplify the analysis of the return channel.
We point out that according to~\cite{mirror17},
the Mirror protocol is completely \textit{robust}
even without this assumption---namely,
it is proved robust against \textit{all} multi-photon attacks
and \textit{all} kinds of losses and dark counts.
However, full \textit{security} analysis of the multi-photon case,
including both losses and dark counts, is very difficult 
even in the simplest one-way standard QKD,
and even more so in any standard two-way QKD protocol
such as ``Plug \& Play''~\cite{plug_play97},
``Ping Pong''~\cite{ping_pong02},
and LM05~\cite{LM05} (see also~\cite{renner-twoway13}).
Furthermore, this case has not been analyzed
in security proofs of many other SQKD protocols
(e.g.,~\cite{cbob_secur15,sqkd_secur16,calice_secur18,sqkd_secur18}).
Therefore, we do not aim to solve this major issue here
in the specific case of the Mirror protocol:
extending the full security proof to this most general case
is left for future research.

Our main result in this section is a lower bound on the von Neumann entropy
$S(A|E)$ of the protocol. This allows us to determine a lower bound
on the key rate of the protocol
using the Devetak-Winter key rate equation~\cite{DW05}.
Our main key rate result is summarized in the following theorem
(which uses notations defined in Table~\ref{table_simple_probs}):

\begin{theorem}\label{thm:main-thm}
  Assuming the attack model discussed above, consider the observable statistics and their respective notations listed in Table~\ref{table_simple_probs}.  Then, the key rate of the protocol is lower-bounded by:
  \begin{eqnarray}
\texttt{rate} &\ge& \frac{\braket{E_0|E_0}_\sE + \braket{E_3|E_3}_\sE}{M}
\label{eq:SAE-HAB-bound} \\
&\times&
\left[ H_2\left(\frac{\bk{E_0}_\sE}{\bk{E_0}_\sE + \bk{E_3}_\sE}\right)
- H_2(\lambda_1) \right]\nonumber\\
&+& \frac{\braket{E_1|E_1}_\sE + \braket{E_2|E_2}_\sE}{M}
\nonumber \\
&\times&
\left[ H_2\left(\frac{\bk{E_1}_\sE}{\bk{E_1}_\sE + \bk{E_2}_\sE}\right)
  - H_2(\lambda_2) \right]\nonumber\\
&-& H(A|B),\nonumber
\end{eqnarray}
where:
\begin{eqnarray*}
\lambda_1 &\triangleq& \frac{1}{2} \\
&+& \frac{\sqrt{\left(\bk{E_0}_\sE - \bk{E_3}_\sE\right)^2
+ 4\Re^2\braket{E_0|E_3}_\sE}}
{2\left(\bk{E_0}_\sE+\bk{E_3}_\sE\right)},
\end{eqnarray*}
\begin{eqnarray*}
\lambda_2 &\triangleq& \frac{1}{2} \\
&+& \frac{\sqrt{\left(\bk{E_1}_\sE - \bk{E_2}_\sE\right)^2
+ 4\Re^2\braket{E_1|E_2}_\sE}}
{2\left(\bk{E_1}_\sE+\bk{E_2}_\sE\right)},\\
H_2(x) &\triangleq& -x \log_2(x) - (1-x) \log_2(1-x),
\end{eqnarray*}
\begin{align*}
&H(A|B) \\
&= H \left( \frac{\bk{E_0}_\sE}{M}, \frac{\bk{E_1}_\sE}{M},
\frac{\bk{E_2}_\sE}{M}, \frac{\bk{E_3}_\sE}{M} \right) \\
&- H \left( \frac{\bk{E_0}_\sE + \bk{E_2}_\sE}{M},
\frac{\bk{E_1}_\sE + \bk{E_3}_\sE}{M} \right), \\
&H(x_1, \ldots, x_k) \triangleq -\sum_{j=1}^k x_j \log_2(x_j),
\end{align*}
subject to the following constraint:
\begin{align}
&\Re\left(\braket{E_0|E_3}_\sE + \braket{E_1|E_2}_\sE\right)\nonumber\\
&\ge \frac{1}{2} p_{+,+} - p_{0,+} - p_{1,+}
- \frac{1}{4}(\pctrl{0}+\pctrl{1}) + \frac{1}{2} M \nonumber \\
&- \frac{1}{\sqrt{2}}\left(\sqrt{\pcreate{1}} + \sqrt{\pdouble}\right)
\left(\sqrt{\bk{E_0}_\sE} + \sqrt{\bk{E_2}_\sE}\right) \nonumber \\
&- \frac{1}{\sqrt{2}}\left(\sqrt{\pcreate{0}} + \sqrt{\pdouble}\right)
\left(\sqrt{\bk{E_1}_\sE} + \sqrt{\bk{E_3}_\sE}\right) \nonumber \\
&- \frac{1}{2}\left(\sqrt{\pcreate{0}} + \sqrt{\pdouble}\right)
\left(\sqrt{\pcreate{1}} + \sqrt{\pdouble}\right).
\end{align}
\end{theorem}

We prove Theorem~\ref{thm:main-thm} in several steps.
First, in Subsection~\ref{subsec_attacks}
we describe Eve's most general attacks
that are allowed under our attack model assumptions.
Following this, in Subsection~\ref{subsec_raw_key}
we present the final quantum state $\rho_{\mathrm{ABE}}$
shared by Alice, Bob, and Eve at the end of each round of the protocol,
conditioning on a raw-key bit being generated during that round.
To complete the proof, we must find a lower bound
on the conditional von Neumann entropy $S(A|E)$
corresponding to $\rho_{\mathrm{ABE}}$.
For this, in Subsections~\ref{subsec_iter}--\ref{subsec_swap_all} we show
how Alice and Bob can use observable probabilities
from all types of rounds (see Table~\ref{table_probs}) to compute
inner products and norms of quantum states appearing in $\rho_{\mathrm{ABE}}$.
Then, in Subsection~\ref{subsec_keyrate}
we use a theorem from~\cite{QKD-Tom-Krawec-Arbitrary} to compute
the von Neumann entropy of $\rho_{\mathrm{ABE}}$ as a function
of our computed inner products.
Finally, in Subsection~\ref{subsec_keyrate} we combine all results
from Subsections~\ref{subsec_iter}--\ref{subsec_swap_all} to find lower bounds
on the required inner products as functions
of the observable probabilities from Table~\ref{table_probs},
which completes the proof of Theorem~\ref{thm:main-thm}.

\subsection{Eve's Attacks}\label{subsec_attacks}
\paragraph{Eve's first attack}
We first analyze the forward-channel attack---namely, the attack
on the way from Bob to Alice. Here, we note that it is to Eve's
advantage to simply discard the signal coming from Bob
(which should be the same each round and carries no information
at this point) and inject a signal of her own, possibly consisting
of multiple photons and entangled with her private quantum ancilla.

Specifically, in each round, Bob sends to Alice
the same quantum state: $\ket{0,1}_{\sx,\sB} \triangleq
\frac{\ket{0,1}_\sB + \ket{1,0}_\sB}{\sqrt{2}}$.
At this point, Eve performs her \textit{first} attack:
she replaces Bob's original state by her own state.
Since Bob never prepares alternative initial states,
Eve dropping the signal and replacing it with one of her own
is the most general strategy she could perform
in the collective attack scenario.
Without loss of generality, Eve's state is of the form:
\begin{equation}\label{eq:init-state}
\ket{\psi_0} \triangleq \sum_{\substack{m_1\ge 0\\m_0\ge 0}}
\ket{m_1, m_0}_\sB \ket{e_{m_1,m_0}}_\sE.
\end{equation}
Then, Eve sends subsystem $\sB$ to Alice
and keeps subsystem $\sE$ as her own ancillary state.
Note that as we are dealing with a two-way quantum communication channel,
Eve has two opportunities to attack the quantum signal each round.
The above equation represents the state after her first attack;
however, following Alice's encoding operation,
Eve will have a second opportunity to attack.
Unlike many one-way protocols, we cannot reduce this to
an entanglement-based protocol whereby Eve simply prepares a state
and sends part to Alice and part to Bob: although some reductions
for two-way (S)QKD protocols to equivalent entanglement-based protocols
are known~\cite{beaudry2013security,krawec2018key},
those results cannot be applied to this mirror-based protocol
and so we cannot employ them.
Thus we must analyze Eve's attack in two stages,
which makes the analysis somewhat more complicated.

\paragraph{Eve's second attack}
Then, Alice performs her classical operation (CTRL, SWAP-10, SWAP-01,
or SWAP-ALL) and sends the resulting state back to Bob.
Now, Eve performs her \textit{second} attack, described as the unitary
operator $U_\sR$. As explained above, for the second attack
we make the simplifying assumption that
Eve always sends \textit{at most one photon}---namely, she sends
a superposition of $\ket{0,1}_\sB$, $\ket{1,0}_\sB$,
and $\ket{0,0}_\sB$ with her corresponding ancillary states
$\ket{g_{m_1,m_0}^{0,1}}_\sE$, $\ket{g_{m_1,m_0}^{1,0}}_\sE$,
and $\ket{g_{m_1,m_0}^{0,0}}_\sE$.
We emphasize that this simplifying assumption applies
only to the second attack, and \textit{not} to the first attack.

Thus, Eve's second attack is of the form:
\begin{align}
& U_\sR \ket{m_1', m_0'}_\sB \ket{e_{m_1,m_0}}_\sE\nonumber\\
&= \ket{0,1}_\sB \ket{f_{m_1',m_0',m_1,m_0}^{0,1}}_\sE
+ \ket{1,0}_\sB \ket{f_{m_1',m_0',m_1,m_0}^{1,0}}_\sE\nonumber\\
&+ \ket{0,0}_\sB \ket{f_{m_1',m_0',m_1,m_0}^{0,0}}_\sE.
\label{eq:reverse-attack-op}
\end{align}
However, in our security proof we use terms of the following
simplified notations:
\begin{align}
& U_\sR \ket{m_1, m_0}_\sB \ket{e_{m_1,m_0}}_\sE\nonumber\\
&= \ket{0,1}_\sB \ket{g_{m_1,m_0}^{0,1}}_\sE
+ \ket{1,0}_\sB \ket{g_{m_1,m_0}^{1,0}}_\sE\nonumber\\
&+ \ket{0,0}_\sB \ket{g_{m_1,m_0}^{0,0}}_\sE,
\label{eq:reverse-attack-op-special}
\end{align}
where we denote $\ket{g_{m_1,m_0}^{j,k}}_\sE \triangleq
\ket{f_{m_1,m_0,m_1,m_0}^{j,k}}_\sE$.
We note that the operation of $U_\sR$ on states
$\ket{m_1',m_0'}_\sB \ket{e_{m_1,m_0}}_\sE$ where $m_1' \ne m_1$
or $m_0' \ne m_0$ will not appear in our security proof,
because these states do not give us meaningful
statistics~\footnote{States of the form
$U_\sR \ket{0,m_0}_\sB \ket{e_{m_1,m_0}}_\sE$ and
$U_\sR \ket{m_1,0}_\sB \ket{e_{m_1,m_0}}_\sE$ may appear
in ``raw key'' rounds analyzed in Subsection~\ref{subsec_raw_key},
but we analyze only rounds which contribute to the raw key,
where Alice detects no photon---namely,
$m_1 = 0$ or $m_0 = 0$, respectively.
In addition, states of the form
$U_\sR \ket{0,0}_\sB \ket{e_{m_1,m_0}}_\sE$ may appear
in ``SWAP-ALL'' rounds analyzed in Subsection~\ref{subsec_swap_all},
but we analyze only ``double-clicks'' of Alice
(where Eve's attack $U_\sR$ is irrelevant,
although we use it algebraically to prove Lemma~\ref{lemma})
and ``creation'' events
(where Alice detects no photon, so $m_1 = m_0 = 0$).} and
thus do not contribute to the probabilities in Table~\ref{table_probs}.
We also note that since Eve is all-powerful, she will have no trouble
performing any unitary operation, even if it includes
a complicated operation for reducing the number of photons.

In both attacks, subsystem $\sB$ is sent to a legitimate user,
while subsystem $\sE$ is kept as Eve's ancilla.

\subsection{Analyzing all Types of Rounds}\label{subsec_iter}
In Table~\ref{table_rounds} we classify all rounds
into six types, that Alice and Bob need to analyze.
The rounds are classified according to
Alice's random choice of a classical operation
and Bob's random choice of a measurement basis.
\begin{table}[htpb]
\caption{All types of rounds, according to 
Alice's random choice of a classical operation
[CTRL, SWAP-$x$ ($x \in \{01,10\}$), or SWAP-ALL]
and Bob's random choice of a measurement basis
(computational or Hadamard).}
\label{table_rounds}
\centering
\begin{tabular}{ccc}
\hline\hline
\textbf{Round Type} &
\textbf{Alice's Operation} &
\textbf{Bob's Basis} \\
\hline
``raw key'' & SWAP-$x$ & computational \\
mismatched ``raw key'' & SWAP-$x$ & Hadamard \\
\hline
``test'' & CTRL & Hadamard \\
mismatched ``test'' & CTRL & computational \\
\hline
``SWAP-ALL'' & SWAP-ALL & computational \\
mismatched ``SWAP-ALL'' & SWAP-ALL & Hadamard \\
\hline\hline
\end{tabular}
\end{table}

Notice the use of basis-mismatched rounds.
Technically, we could have used only the ``standard'' (basis-matching)
rounds for completing the security proof,
by using the Cauchy-Schwarz inequality for finding worst-case bounds.
However, using the technique of analyzing ``mismatched
measurements''~\cite{QKD-Tom-First,QKD-Tom-KeyRateIncrease},
we can derive a significantly improved formula for the final key rate.

Alice and Bob have to find relevant statistics for each type of round
and compute all probabilities listed in Table~\ref{table_probs}.
In Subsections~\ref{subsec_raw_key}--\ref{subsec_swap_all}
we relate these probabilities
to the quantum states appearing in our security proof,
and in Subsection~\ref{subsec_keyrate}
we derive the resulting final key rate formula.

\begin{table*}[htpb]
\caption{All the probabilities Alice and Bob need to compute,
and the formulas relating them to quantum states in our security proof.
All formulas are proved
in Subsections~\ref{subsec_raw_key}--\ref{subsec_swap_all}.}
\label{table_probs}
\centering
\begin{tabular}{llll}
\hline\hline
\textbf{Probability} & \textbf{Round} & \textbf{Definition} &
\textbf{Formula} \\
\hline
$\bk{E_0}_\sE$ & ``raw key'' & Alice and Bob get raw key bits
$0,0$, respectively & \\
$\bk{E_1}_\sE$ & ``raw key'' & Alice and Bob get raw key bits
$0,1$, respectively & \\
$\bk{E_2}_\sE$ & ``raw key'' & Alice and Bob get raw key bits
$1,0$, respectively & \\
$\bk{E_3}_\sE$ & ``raw key'' & Alice and Bob get raw key bits
$1,1$, respectively & \\
\hline
$M$ & ``raw key'' & both Alice and Bob get raw key bits &
$= \sum_{i=0}^3 \bk{E_i}_\sE$ \\
\hline
$p_{0,+}$ & mismatched & Alice gets raw key bit $0$,
and Bob observes $\ket{+}$ & $2\Re\braket{E_0|E_1}_\sE = 2 p_{0,+}$ \\
& ``raw key'' & & $- \left(\bk{E_0}_\sE + \bk{E_1}_\sE\right)$ \\
$p_{1,+}$ & mismatched & Alice gets raw key bit $1$,
and Bob observes $\ket{+}$ & $2\Re\braket{E_2|E_3}_\sE = 2 p_{1,+}$ \\
& ``raw key'' & & $- \left(\bk{E_2}_\sE + \bk{E_3}_\sE\right)$ \\
\hline
$p_{+,+}$ & ``test'' & Bob observes $\ket{+}$ & $= \left|\sum_{i=0}^3
\ket{E_i}_\sE - \sum_{j=0}^1
\left(\ket{g_j}_\sE - \ket{h_j}_\sE\right)\right|^2$
\\
\hline
$\pctrl{0}$ & mismatched & Bob observes $\ket{0,1}$ &
$= 2 \left|\ket{E_0}_\sE + \ket{E_2}_\sE
- \ket{g_0}_\sE + \ket{h_0}_\sE\right|^2$ \\
& ``test'' & & \\
$\pctrl{1}$ & mismatched & Bob observes $\ket{1,0}$ &
$= 2 \left|\ket{E_1}_\sE + \ket{E_3}_\sE
- \ket{g_1}_\sE + \ket{h_1}_\sE\right|^2$ \\
& ``test'' & & \\
\hline
$\pdouble$ & ``SWAP-ALL'' & Alice observes a ``double-click'' event
($\ket{1,1}$) &
$\bk{h_0}_\sE + \bk{h_1}_\sE \le \frac{1}{2}\pdouble$ \\
\hline
$\pcreate{0}$ & ``SWAP-ALL'' & Alice observes $\ket{0,0}$, and Bob
observes $\ket{0,1}$ & $= 2\bk{g_0}_\sE$ \\
$\pcreate{1}$ & ``SWAP-ALL'' & Alice observes $\ket{0,0}$, and Bob
observes $\ket{1,0}$ & $= 2\bk{g_1}_\sE$ \\
\hline\hline
\end{tabular}
\end{table*}

In \textit{all} types of rounds, Bob begins by sending
$\ket{0,1}_{\sx,\sB} \triangleq
\frac{\ket{0,1}_\sB + \ket{1,0}_\sB}{\sqrt{2}}$,
which Eve immediately replaces by her own state
$\ket{\psi_0} \triangleq \sum_{\substack{m_1\ge 0\\m_0\ge 0}}
\ket{m_1, m_0}_\sB \ket{e_{m_1,m_0}}_\sE$
(see Eq.~\eqref{eq:init-state}).
Then, Alice chooses her classical operation, as detailed below.

\subsection{``Raw Key'' Rounds:
Alice Chooses the SWAP-$x$ Operation}\label{subsec_raw_key}
In ``raw key'' rounds, Alice chooses either SWAP-10 or SWAP-01
(each with probability $\frac{1}{2}$),
that are defined in Eqs.~\eqref{eq:S1-def}--\eqref{eq:S0-def}.
Then, the non-normalized state of the joint system,
conditioning on Alice detecting \textit{no photon}~\footnote{Notice
that according to Table~\ref{table_expected_results},
raw key bits are shared by Alice and Bob
only in ``raw key'' rounds where Alice detects \textit{no photon}
and Bob \textit{does} detect a photon.}, is:
\begin{align}
&\rho_{\mathrm{ABE}}^{(\text{after Alice})} \nonumber \\
&= \frac{1}{2} \bkb{0}_\sA \otimes P \left( \sum_{m_0 \ge 0}
\ket{0, m_0}_\sB \ket{e_{0,m_0}}_\sE \right) \nonumber \\
&+ \frac{1}{2} \bkb{1}_\sA \otimes P \left( \sum_{m_1 \ge 0}
\ket{m_1, 0}_\sB \ket{e_{m_1,0}}_\sE \right),
\end{align}
where we define:
\begin{equation}
P(\ket{\psi}) \triangleq \ket{\psi} \bra{\psi}.
\end{equation}
We note that $\bket{0}_\sA$ and $\bket{1}_\sA$ denote the raw key bit
of Alice: Alice deduces it from her own choice
of SWAP-10 (which corresponds to $\bket{0}_\sA$)
or SWAP-01 (which corresponds to $\bket{1}_\sA$),
as explained in Table~\ref{table_expected_results}.

After Eve's second attack
(namely, after Eve applies the $U_\sR$ operator
defined in Eq.~\eqref{eq:reverse-attack-op-special}),
the joint non-normalized state becomes:
\begin{align}
& U_\sR \rho_{\mathrm{ABE}}^{(\text{after Alice})} U_\sR^\dagger
\nonumber \\
&= \frac{1}{2} \bkb{0}_\sA \otimes P \left(
\ket{0, 1}_\sB \sum_{m_0 \ge 0} \ket{g_{0,m_0}^{0,1}}_\sE \right.
\nonumber \\
&+ \left.
\ket{1, 0}_\sB \sum_{m_0 \ge 0} \ket{g_{0,m_0}^{1,0}}_\sE +
\ket{0, 0}_\sB \sum_{m_0 \ge 0} \ket{g_{0,m_0}^{0,0}}_\sE \right)
\nonumber
\end{align}
\begin{align}
&+ \frac{1}{2} \bkb{1}_\sA \otimes P \left(
\ket{0, 1}_\sB \sum_{m_1 \ge 0} \ket{g_{m_1,0}^{0,1}}_\sE \right.
\nonumber \\
&+ \left.
\ket{1, 0}_\sB \sum_{m_1 \ge 0} \ket{g_{m_1,0}^{1,0}}_\sE +
\ket{0, 0}_\sB \sum_{m_1 \ge 0} \ket{g_{m_1,0}^{0,0}}_\sE \right).
\label{eq:state-before-b-q}
\end{align}
To simplify notation, we define the following states
in subsystem $\sE$:
\begin{eqnarray}
\ket{E_0}_\sE &\triangleq& \frac{1}{\sqrt{2}}
\sum_{m_0 \ge 0} \ket{g_{0,m_0}^{0,1}}_\sE,\nonumber\\
\ket{E_1}_\sE &\triangleq& \frac{1}{\sqrt{2}}
\sum_{m_0 \ge 0} \ket{g_{0,m_0}^{1,0}}_\sE,\nonumber\\
\ket{E_2}_\sE &\triangleq& \frac{1}{\sqrt{2}}
\sum_{m_1 \ge 0} \ket{g_{m_1,0}^{0,1}}_\sE,\nonumber\\
\ket{E_3}_\sE &\triangleq& \frac{1}{\sqrt{2}}
\sum_{m_1 \ge 0} \ket{g_{m_1,0}^{1,0}}_\sE,\label{eq:E-states}
\end{eqnarray}
so Eq.~\eqref{eq:state-before-b-q} becomes:
\begin{align}
& U_\sR \rho_{\mathrm{ABE}}^{(\text{after Alice})} U_\sR^\dagger
\nonumber \\
&= \bkb{0}_\sA \otimes P \left(
\ket{0, 1}_\sB \ket{E_0}_\sE + \ket{1, 0}_\sB \ket{E_1}_\sE \right.
\nonumber \\
&+ \left.
\ket{0, 0}_\sB \frac{1}{\sqrt{2}}
\sum_{m_0 \ge 0} \ket{g_{0,m_0}^{0,0}}_\sE \right)
\nonumber \\
&+ \bkb{1}_\sA \otimes P \left(
\ket{0, 1}_\sB \ket{E_2}_\sE + \ket{1, 0}_\sB \ket{E_3}_\sE \right.
\nonumber \\
&+ \left.\ket{0, 0}_\sB \frac{1}{\sqrt{2}}
\sum_{m_1 \ge 0} \ket{g_{m_1,0}^{0,0}}_\sE \right).
\label{eq:state-before-b}
\end{align}

\subsubsection{Standard ``Raw Key'' Rounds:
Bob Chooses the Computational Basis}
Now, Bob measures his subsystem in the computational basis
$\{\bket{0}, \bket{1}\}$, and his raw key bit
is simply his measurement result (``$0$'' or ``$1$'').
Conditioning on Bob detecting a photon
(namely, measuring $\ket{0,1}_\sB$ or $\ket{1,0}_\sB$),
the final \textit{normalized} state of the joint system
after Bob's measurement is:
\begin{align}\label{eq:final-state}
\rho_{\mathrm{ABE}} = \frac{1}{M}(
& \bkb{00}_{\mathrm{AB}} \otimes \kb{E_0}_\sE \nonumber \\
+&\bkb{01}_{\mathrm{AB}} \otimes \kb{E_1}_\sE \nonumber \\
+&\bkb{10}_{\mathrm{AB}} \otimes \kb{E_2}_\sE \nonumber \\
+&\bkb{11}_{\mathrm{AB}} \otimes \kb{E_3}_\sE),
\end{align}
where $M$ is a normalization term, which is computed below.

Eq.~\eqref{eq:final-state} confirms that,
as written in Table~\ref{table_probs}:
\begin{align}
\bk{E_0}_\sE = \pr&\left(\text{Alice gets raw key bit $0$,}\right.
\nonumber \\
&\left.\text{and Bob gets raw key bit $0$}\right), \\
\bk{E_1}_\sE = \pr&\left(\text{Alice gets raw key bit $0$,}\right.
\nonumber \\
&\left.\text{and Bob gets raw key bit $1$}\right), \\
\bk{E_2}_\sE = \pr&\left(\text{Alice gets raw key bit $1$,}\right.
\nonumber \\
&\left.\text{and Bob gets raw key bit $0$}\right), \\
\bk{E_3}_\sE = \pr&\left(\text{Alice gets raw key bit $1$,}\right.
\nonumber \\
&\left.\text{and Bob gets raw key bit $1$}\right).
\end{align}
In addition, we can compute the normalization term $M$:
\begin{eqnarray}
M &=& \sum_{i=0}^3 \bk{E_i}_\sE \label{eq:M} \\
&=& \pr(\text{both Alice and Bob
get raw key bits}) \nonumber \\
&=& \pr\left(\text{Alice observes no photon,}\right. \nonumber \\
&&\left.\text{and Bob observes a photon}\right). \nonumber
\end{eqnarray}

Notice that all these probabilities are \textit{observable} quantities:
Alice and Bob estimate $\bk{E_0}_\sE$, $\bk{E_1}_\sE$, $\bk{E_2}_\sE$,
$\bk{E_3}_\sE$, and $M$ during the classical post-processing stage
by testing a random subset of raw key bits.

\subsubsection{Mismatched ``Raw Key'' Rounds:
Bob Chooses the Hadamard Basis}
In this case, Bob measures his subsystem in the Hadamard basis
$\{\bket{+}, \bket{-}\}$.
Let us rewrite the state he measures, provided in
Eq.~\eqref{eq:state-before-b}, by substituting
$\ket{0,1}_\sB = \frac{\ket{+}_\sB + \ket{-}_\sB}{\sqrt{2}}$ and
$\ket{1,0}_\sB = \frac{\ket{+}_\sB - \ket{-}_\sB}{\sqrt{2}}$.
We get:
\begin{align}
& U_\sR \rho_{\mathrm{ABE}}^{(\text{after Alice})} U_\sR^\dagger
\nonumber \\
&= \bkb{0}_\sA \otimes P \left(
\ket{0, 1}_\sB \ket{E_0}_\sE + \ket{1, 0}_\sB \ket{E_1}_\sE \right.
\nonumber \\
& \left. + \ket{0, 0}_\sB \frac{1}{\sqrt{2}}
\sum_{m_0 \ge 0} \ket{g_{0,m_0}^{0,0}}_\sE \right)
\nonumber \\
&+ \bkb{1}_\sA \otimes P \left(
\ket{0, 1}_\sB \ket{E_2}_\sE + \ket{1, 0}_\sB \ket{E_3}_\sE \right.
\nonumber \\
& \left. + \ket{0, 0}_\sB \frac{1}{\sqrt{2}}
\sum_{m_1 \ge 0} \ket{g_{m_1,0}^{0,0}}_\sE \right)
\nonumber \\
&= \bkb{0}_\sA \otimes P \left( \frac{\ket{+}_\sB}{\sqrt{2}}
(\ket{E_0}_\sE + \ket{E_1}_\sE) + \cdots \right) \nonumber \\
&+ \bkb{1}_\sA \otimes P \left( \frac{\ket{+}_\sB}{\sqrt{2}}
(\ket{E_2}_\sE + \ket{E_3}_\sE) + \cdots \right),
\end{align}
where the remainders of the above terms (the ``$\cdots$'') are
irrelevant to our discussion.

We denote by $p_{0,+}$ the probability that
Alice gets the raw key bit $0$ and Bob observes $\ket{+}_\sB$
(see Table~\ref{table_probs}).
Similarly, we denote by $p_{1,+}$ the probability that
Alice gets the raw key bit $1$ and Bob observes $\ket{+}_\sB$.
These probabilities are:
\begin{eqnarray*}
p_{0,+} &=& \left|\frac{\ket{E_0}_\sE + \ket{E_1}_\sE}{\sqrt2}\right|^2
\\
&=& \frac{1}{2} \left(\bk{E_0}_\sE + \bk{E_1}_\sE
+ 2 \Re\braket{E_0|E_1}_\sE\right), \\
p_{1,+} &=& \left|\frac{\ket{E_2}_\sE + \ket{E_3}_\sE}{\sqrt2}\right|^2
\\
&=& \frac{1}{2} \left( \bk{E_2}_\sE + \bk{E_3}_\sE
+ 2 \Re\braket{E_2|E_3}_\sE\right).
\end{eqnarray*}
Therefore, we find:
\begin{align}
2 \Re\braket{E_0|E_1}_\sE &= 2 p_{0,+}
- \left(\bk{E_0}_\sE + \bk{E_1}_\sE\right),\label{eq:p0+}\\
2 \Re\braket{E_2|E_3}_\sE &= 2 p_{1,+}
- \left(\bk{E_2}_\sE + \bk{E_3}_\sE\right).\label{eq:p1+}
\end{align}

\subsection{``Test'' Rounds:
Alice Chooses the CTRL Operation}\label{subsec_testkey}
In ``test'' rounds, Eve sends to Alice her state
$\ket{\psi_0} \triangleq \sum_{\substack{m_1\ge 0\\m_0\ge 0}}
\ket{m_1, m_0}_\sB \ket{e_{m_1,m_0}}_\sE$
(see Eq.~\eqref{eq:init-state}), and Alice chooses
the CTRL operation---namely, Alice does nothing
(see Eq.~\eqref{eq:I-def}).
Then, Eve applies her second attack $U_\sR$
(see Eq.~\eqref{eq:reverse-attack-op-special}),
and the resulting quantum state is:
\begin{eqnarray}
U_\sR \ket{\psi_0} &=& \ket{0,1}_\sB
\sum_{\substack{m_1 \ge 0 \\ m_0 \ge 0}} \ket{g_{m_1,m_0}^{0,1}}_\sE
+ \ket{1,0}_\sB
\sum_{\substack{m_1 \ge 0 \\ m_0 \ge 0}} \ket{g_{m_1,m_0}^{1,0}}_\sE
\nonumber \\
&+& \ket{0,0}_\sB
\sum_{\substack{m_1 \ge 0 \\ m_0 \ge 0}} \ket{g_{m_1,m_0}^{0,0}}_\sE.
\label{eq:reflect-state}
\end{eqnarray}

\subsubsection{Standard ``Test'' Rounds:
Bob Chooses the Hadamard Basis}
Changing basis, whereby
$\ket{0,1}_\sB = \frac{\ket{+}_\sB + \ket{-}_\sB}{\sqrt{2}}$ and
$\ket{1,0}_\sB = \frac{\ket{+}_\sB - \ket{-}_\sB}{\sqrt{2}}$, we find:
\begin{eqnarray}
U_\sR \ket{\psi_0} &=& \frac{\ket{+}_\sB}{\sqrt{2}}
\left( \sum_{\substack{m_1 \ge 0 \\ m_0 \ge 0}}
\ket{g_{m_1,m_0}^{0,1}}_\sE
+ \sum_{\substack{m_1 \ge 0 \\ m_0 \ge 0}} \ket{g_{m_1,m_0}^{1,0}}_\sE
\right) \nonumber \\
&+& \cdots,\label{eq:reflect-state-x}
\end{eqnarray}
where the extra $\cdots$ term is irrelevant to our discussion.

Let $p_{+,+}$ be the probability that Bob observes $\ket{+}_\sB$
(see Table~\ref{table_probs}).
From Eq.~\eqref{eq:reflect-state-x} we deduce:
\begin{eqnarray}
p_{+,+} &=& \left|
\frac{1}{\sqrt{2}} \sum_{\substack{m_1 \ge 0 \\ m_0 \ge 0}}
\ket{g_{m_1,m_0}^{0,1}}_\sE +
\frac{1}{\sqrt{2}} \sum_{\substack{m_1 \ge 0 \\ m_0 \ge 0}}
\ket{g_{m_1,m_0}^{1,0}}_\sE \right|^2\nonumber \\
&=& \left|\left(\ket{E_0}_\sE + \ket{E_2}_\sE
- \ket{g_0}_\sE + \ket{h_0}_\sE\right)\right. \nonumber \\
&+& \left.\left(\ket{E_1}_\sE + \ket{E_3}_\sE
- \ket{g_1}_\sE + \ket{h_1}_\sE\right)\right|^2\nonumber \\
&=& \left|\ket{E_0}_\sE + \ket{E_2}_\sE
- \ket{g_0}_\sE + \ket{h_0}_\sE\right|^2\nonumber \\
&+& \left|\ket{E_1}_\sE + \ket{E_3}_\sE
- \ket{g_1}_\sE + \ket{h_1}_\sE\right|^2\nonumber \\
&+& 2 \Re\left[\left(\bra{E_0}_\sE + \bra{E_2}_\sE
- \bra{g_0}_\sE + \bra{h_0}_\sE\right)\right.\nonumber\\
&\times& \left.\left(\ket{E_1}_\sE + \ket{E_3}_\sE
- \ket{g_1}_\sE + \ket{h_1}_\sE\right)\right],\label{eq:pr-pp}
\end{eqnarray}
where we define:
\begin{eqnarray}
\ket{g_0}_\sE &\triangleq& \frac{1}{\sqrt{2}} \ket{g_{0,0}^{0,1}}_\sE,
\nonumber \\
\ket{g_1}_\sE &\triangleq& \frac{1}{\sqrt{2}} \ket{g_{0,0}^{1,0}}_\sE,
\nonumber \\
\ket{h_0}_\sE &\triangleq& \frac{1}{\sqrt{2}}
\sum_{\substack{m_1 \ge 1 \\ m_0 \ge 1}} \ket{g_{m_1,m_0}^{0,1}}_\sE,
\nonumber \\
\ket{h_1}_\sE &\triangleq& \frac{1}{\sqrt{2}}
\sum_{\substack{m_1 \ge 1 \\ m_0 \ge 1}} \ket{g_{m_1,m_0}^{1,0}}_\sE,
\label{eq:g-h-def}
\end{eqnarray}
and we remember from Eq.~\eqref{eq:E-states} that:
\begin{eqnarray*}
\ket{E_0}_\sE &\triangleq& \frac{1}{\sqrt{2}}
\sum_{m_0 \ge 0} \ket{g_{0,m_0}^{0,1}}_\sE, \\
\ket{E_1}_\sE &\triangleq& \frac{1}{\sqrt{2}}
\sum_{m_0 \ge 0} \ket{g_{0,m_0}^{1,0}}_\sE, \\
\ket{E_2}_\sE &\triangleq& \frac{1}{\sqrt{2}}
\sum_{m_1 \ge 0} \ket{g_{m_1,0}^{0,1}}_\sE, \\
\ket{E_3}_\sE &\triangleq& \frac{1}{\sqrt{2}}
\sum_{m_1 \ge 0} \ket{g_{m_1,0}^{1,0}}_\sE.
\end{eqnarray*}

\subsubsection{Mismatched ``Test'' Rounds:
Bob Chooses the Computational Basis}
In this case, we denote by $\pctrl{0}$ the probability of
Bob observing $\ket{0,1}_\sB$ (see Table~\ref{table_probs}).
From Eq.~\eqref{eq:reflect-state},
we find (similarly to the computation of $p_{+,+}$):
\begin{eqnarray}
\pctrl{0} &=& \left|\sum_{\substack{m_1 \ge 0 \\ m_0 \ge 0}}
\ket{g_{m_1,m_0}^{0,1}}_\sE\right|^2\nonumber\\
&=& 2 \left|\ket{E_0}_\sE + \ket{E_2}_\sE
- \ket{g_0}_\sE + \ket{h_0}_\sE\right|^2.\label{eq:pctrl0}
\end{eqnarray}
Similarly, denoting by $\pctrl{1}$ the probability of Bob observing
$\ket{1,0}_\sB$, we find:
\begin{eqnarray}
\pctrl{1} &=& \left|\sum_{\substack{m_1 \ge 0 \\ m_0 \ge 0}}
\ket{g_{m_1,m_0}^{1,0}}_\sE\right|^2\nonumber\\
&=& 2 \left|\ket{E_1}_\sE + \ket{E_3}_\sE
- \ket{g_1}_\sE + \ket{h_1}_\sE\right|^2.\label{eq:pctrl1}
\end{eqnarray}

\subsection{``SWAP-ALL'' Rounds:
Alice Chooses the SWAP-ALL Operation,
and Bob Chooses the Computational Basis}\label{subsec_swap_all}
\subsubsection{The Probability of a ``Double-Click'' Event:
Used for Upper-Bounding $\bk{h_0}_\sE$ and $\bk{h_1}_\sE$}
In ``SWAP-ALL'' rounds, Eve sends to Alice the initial state
$\ket{\psi_0} \triangleq \sum_{\substack{m_1\ge 0\\m_0\ge 0}}
\ket{m_1, m_0}_\sB \ket{e_{m_1,m_0}}_\sE$ described in
Eq.~\eqref{eq:init-state}, and Alice chooses the SWAP-ALL operation
defined in Eq.~\eqref{eq:S-def}, which essentially means that Alice
measures subsystem $\sB$ and sends a vacuum state towards Bob.

Let us denote by $\pdouble$ the probability that Alice observes a
``double-click'' event (detecting a photon in \textit{both} modes
$\bket{0}$ and $\bket{1}$)---namely,
that she measures a state $\ket{m_1, m_0}_{\sA_{\mathrm{anc}}}$
where $m_1, m_0 \ge 1$
(see Table~\ref{table_probs}). This probability is easily found to be:
\[
\pdouble =
\sum_{\substack{m_1 \ge 1 \\ m_0 \ge 1}}\bk{e_{m_1,m_0}}_\sE.
\]

We can thus prove the following Lemma:
\begin{lemma}\label{lemma}
$\bk{h_0}_\sE \le \frac{1}{2} \pdouble$ and
$\bk{h_1}_\sE \le \frac{1}{2} \pdouble$,
where $\ket{h_0}_\sE,\ket{h_1}_\sE$
were defined in Eq.~\eqref{eq:g-h-def}.
\end{lemma}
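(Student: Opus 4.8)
The plan is to exploit the unitarity of Eve's return attack $U_\sR$ on the ``double-occupancy'' part of her injected state, which is exactly the part responsible for a double-click. The obstacle to a naive argument is that $\ket{h_0}_\sE$ and $\ket{h_1}_\sE$ (defined in Eq.~\eqref{eq:g-h-def}) are each a \emph{sum} of the vectors $\ket{g_{m_1,m_0}^{0,1}}_\sE$ and $\ket{g_{m_1,m_0}^{1,0}}_\sE$, not a sum of their norms; bounding them term by term via the triangle inequality would leave uncontrolled interference cross terms. The idea is therefore to package the whole sum into a single application of $U_\sR$, so that orthogonality of the three outgoing $\sB$ modes absorbs all cross terms simultaneously.

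First I would isolate the double-occupancy sector of Eve's initial state $\ket{\psi_0}$ from Eq.~\eqref{eq:init-state}:
\[
\ket{\Psi} \triangleq \sum_{\substack{m_1 \ge 1 \\ m_0 \ge 1}}
\ket{m_1, m_0}_\sB \ket{e_{m_1,m_0}}_\sE .
\]
Since the Fock states $\ket{m_1,m_0}_\sB$ are orthonormal, its squared norm is precisely the double-click probability: $\bk{\Psi} = \sum_{m_1 \ge 1, m_0 \ge 1} \bk{e_{m_1,m_0}}_\sE = \pdouble$.

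Next I would apply $U_\sR$ to $\ket{\Psi}$ term by term using Eq.~\eqref{eq:reverse-attack-op-special} and collect the coefficients of the orthonormal return modes $\ket{0,1}_\sB$, $\ket{1,0}_\sB$, and $\ket{0,0}_\sB$. By the definitions in Eq.~\eqref{eq:g-h-def}, the first two coefficients are exactly $\sqrt{2}\,\ket{h_0}_\sE$ and $\sqrt{2}\,\ket{h_1}_\sE$, while the third is a residual vacuum-mode vector $\ket{h'}_\sE \triangleq \sum_{m_1 \ge 1, m_0 \ge 1} \ket{g_{m_1,m_0}^{0,0}}_\sE$. Because $U_\sR$ is unitary it preserves the norm, and because the three $\sB$ modes are mutually orthogonal the squared norm splits cleanly, giving
\[
\pdouble = \bk{\Psi} = 2\,\bk{h_0}_\sE + 2\,\bk{h_1}_\sE + \bk{h'}_\sE .
\]

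Finally, since $\bk{h'}_\sE \ge 0$ and $\bk{h_0}_\sE, \bk{h_1}_\sE \ge 0$, dropping the vacuum term together with one of the two $h$-terms from the right-hand side yields $\bk{h_0}_\sE \le \frac{1}{2}\pdouble$ and, symmetrically, $\bk{h_1}_\sE \le \frac{1}{2}\pdouble$, as claimed. I expect the only genuine content to be the repackaging step: once the sum is written as a single unitary image $U_\sR\ket{\Psi}$, norm preservation and mode orthogonality force the bound with no remaining interference terms to estimate. This is also precisely why the footnote remarks that $U_\sR$ is ``irrelevant'' physically in SWAP-ALL rounds yet ``used algebraically'' here.
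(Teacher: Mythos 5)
Your proof is correct and is essentially identical to the paper's own argument: the paper likewise isolates the double-occupancy sector (calling it $\ket{\zeta}$, with a $\tfrac{1}{\sqrt{2}}$ prefactor so that $\bk{\zeta} = \tfrac{1}{2}\pdouble$), applies $U_\sR$ once, and uses unitarity plus orthogonality of the return modes to get $\bk{h_0}_\sE + \bk{h_1}_\sE + \bk{h_{\mathrm{vac}}}_\sE = \tfrac{1}{2}\pdouble$ before dropping non-negative terms. The only difference is your normalization convention (carrying the $\sqrt{2}$ on the $h$-vectors instead of on the packaged state), which is purely cosmetic.
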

\begin{proof}
Let us define the non-normalized state $\ket{\zeta}$ as:
\[
\ket{\zeta} \triangleq \frac{1}{\sqrt{2}}
\sum_{\substack{m_1 \ge 1 \\ m_0 \ge 1}}
\ket{m_1, m_0}_\sB \ket{e_{m_1,m_0}}_\sE.
\]
(We use the state $\ket{\zeta}$ only for this algebraic proof;
it does not appear in the protocol.)

Clearly:
\[
\bk{\zeta} = \frac{1}{2}
\sum_{\substack{m_1 \ge 1 \\ m_0 \ge 1}}\bk{e_{m_1,m_0}}_\sE
= \frac{1}{2} \pdouble.
\]

Applying $U_\sR$ (see Eq.~\eqref{eq:reverse-attack-op-special}),
the state $\ket{\zeta}$ evolves to:
\begin{eqnarray*}
U_\sR\ket{\zeta} &=& \frac{1}{\sqrt{2}}
\sum_{\substack{m_1 \ge 1 \\ m_0 \ge 1}}
\left( \ket{0,1}_\sB \ket{g_{m_1,m_0}^{0,1}}_\sE \right. \\
&+& \left. \ket{1,0}_\sB \ket{g_{m_1,m_0}^{1,0}}_\sE
+ \ket{0,0}_\sB \ket{g_{m_1,m_0}^{0,0}}_\sE \right) \\
&=& \ket{0,1}_\sB \ket{h_0}_\sE + \ket{1,0}_\sB \ket{h_1}_\sE
+ \ket{0,0}_\sB \ket{h_{\mathrm{vac}}}_\sE
\end{eqnarray*}
(where $\ket{h_0}_\sE,\ket{h_1}_\sE$ were defined in
Eq.~\eqref{eq:g-h-def},
and $\ket{h_{\mathrm{vac}}}_\sE \triangleq \frac{1}{\sqrt{2}}
\sum_{\substack{m_1 \ge 1 \\ m_0 \ge 1}} \ket{g_{m_1,m_0}^{0,0}}_\sE$).

By unitarity of $U_\sR$, we have:
\[
\frac{1}{2} \pdouble = \bk{\zeta} = \bk{h_0}_\sE + \bk{h_1}_\sE
+ \bk{h_{\mathrm{vac}}}_\sE,
\]
which implies that $\bk{h_0}_\sE + \bk{h_1}_\sE
\le \frac{1}{2} \pdouble$.
Since both $\bk{h_0}_\sE$ and $\bk{h_1}_\sE$ are non-negative,
this implies $\bk{h_0}_\sE \le \frac{1}{2} \pdouble$
and $\bk{h_1}_\sE \le \frac{1}{2} \pdouble$, as we wanted.
\end{proof}

\subsubsection{The Probability of a ``Creation'' Event:
Used for Computing $\bk{g_0}_\sE$ and $\bk{g_1}_\sE$}
Let $\pcreate{0}$ denote the probability that
Alice observes $\ket{0,0}_{\sA_{\mathrm{anc}}}$
(namely, a vacuum state) and
Bob observes $\ket{0,1}_\sB$ (see Table~\ref{table_probs}).
In this event, Eve ``creates'' (on the way from Alice to Bob)
a photon in the $\bket{0}$ mode that should not have existed.
(See~\cite{mirror_attack18} for examples of such attacks.)
Similarly, let $\pcreate{1}$ denote the probability that
Alice observes $\ket{0,0}_{\sA_{\mathrm{anc}}}$
and Bob observes $\ket{1,0}_\sB$.

After Eve sends the initial state
\[\ket{\psi_0} \triangleq \sum_{\substack{m_1\ge 0\\m_0\ge 0}}
\ket{m_1, m_0}_\sB \ket{e_{m_1,m_0}}_\sE\] described in
Eq.~\eqref{eq:init-state}, and after Alice applies the SWAP-ALL
operation defined in Eq.~\eqref{eq:S-def}, the resulting state is:
\[
\sum_{\substack{m_1 \ge 0 \\ m_0 \ge 0}}
\ket{m_1, m_0}_{\sA_{\mathrm{anc}}}
\ket{0, 0}_\sB \ket{e_{m_1,m_0}}_\sE.
\]
For computing the probabilities $\pcreate{0}$ and $\pcreate{1}$,
we need to analyze the term
where Alice observes $\ket{0,0}_{\sA_{\mathrm{anc}}}$---namely,
the term
$\ket{0, 0}_{\sA_{\mathrm{anc}}} \ket{0, 0}_\sB \ket{e_{0,0}}_\sE$.
Now, Eve's second attack applies the unitary operator $U_\sR$
(described in Eq.~\eqref{eq:reverse-attack-op-special})
to this non-normalized term, which gives the following final result:
\begin{align*}
&\ket{0, 0}_{\sA_{\mathrm{anc}}}
\otimes U_\sR \ket{0, 0}_\sB \ket{e_{0,0}}_\sE \\
&= \ket{0, 0}_{\sA_{\mathrm{anc}}}
\otimes \left[ \ket{0,1}_\sB \ket{g_{0,0}^{0,1}}_\sE
\right. \\
&+ \left. \ket{1,0}_\sB \ket{g_{0,0}^{1,0}}_\sE
+ \ket{0,0}_\sB \ket{g_{0,0}^{0,0}}_\sE \right].
\end{align*}
Since $\pcreate{0}$ is the probability that
Alice observes $\ket{0,0}_{\sA_{\mathrm{anc}}}$
and Bob observes $\ket{0,1}_\sB$
(and similarly for $\pcreate{1}$), we get, according to the definitions
of $\ket{g_0}_\sE,\ket{g_1}_\sE$ in Eq.~\eqref{eq:g-h-def}:
\begin{eqnarray}
\pcreate{0} &=& \bk{g_{0,0}^{0,1}}_\sE
= 2\bk{g_0}_\sE,\label{eq:pcreate0} \\
\pcreate{1} &=& \bk{g_{0,0}^{1,0}}_\sE
= 2\bk{g_1}_\sE.\label{eq:pcreate1}
\end{eqnarray}

\subsection{Deriving the Final Key Rate}\label{subsec_keyrate}
We remember that the final normalized state of the joint system
after Bob's measurement, in standard ``raw key'' rounds where
raw key bits \textit{are} generated, is,
according to Eq.~\eqref{eq:final-state}:
\begin{align*}
\rho_{\mathrm{ABE}} = \frac{1}{M}(
& \bkb{00}_{\mathrm{AB}} \otimes \kb{E_0}_\sE \nonumber \\
+&\bkb{01}_{\mathrm{AB}} \otimes \kb{E_1}_\sE \nonumber \\
+&\bkb{10}_{\mathrm{AB}} \otimes \kb{E_2}_\sE \nonumber \\
+&\bkb{11}_{\mathrm{AB}} \otimes \kb{E_3}_\sE).
\end{align*}

Theorem~1 from~\cite{QKD-Tom-Krawec-Arbitrary} allows us
to mathematically compute a bound
on the conditional von Neumann entropy $S(A|E)$
of $\rho_{\mathrm{ABE}}$, as follows:
\begin{eqnarray}
S(A|E) &\ge& \frac{\braket{E_0|E_0}_\sE + \braket{E_3|E_3}_\sE}{M}
\label{eq:SAE-bound} \\
&\times&
\left[ H_2\left(\frac{\bk{E_0}_\sE}{\bk{E_0}_\sE + \bk{E_3}_\sE}\right)
- H_2(\lambda_1) \right]\nonumber\\
&+& \frac{\braket{E_1|E_1}_\sE + \braket{E_2|E_2}_\sE}{M}
\nonumber \\
&\times&
\left[ H_2\left(\frac{\bk{E_1}_\sE}{\bk{E_1}_\sE + \bk{E_2}_\sE}\right)
- H_2(\lambda_2) \right],\nonumber
\end{eqnarray}
where:
\begin{eqnarray*}
\lambda_1 &\triangleq& \frac{1}{2} \\
&+& \frac{\sqrt{\left(\bk{E_0}_\sE - \bk{E_3}_\sE\right)^2
+ 4\Re^2\braket{E_0|E_3}_\sE}}
{2\left(\bk{E_0}_\sE+\bk{E_3}_\sE\right)},
\end{eqnarray*}
\begin{eqnarray*}
\lambda_2 &\triangleq& \frac{1}{2} \\
&+& \frac{\sqrt{\left(\bk{E_1}_\sE - \bk{E_2}_\sE\right)^2
+ 4\Re^2\braket{E_1|E_2}_\sE}}
{2\left(\bk{E_1}_\sE+\bk{E_2}_\sE\right)},\\
H_2(x) &\triangleq& -x \log_2(x) - (1-x) \log_2(1-x).
\end{eqnarray*}

Thus, to complete our proof of security, we only need bounds on
the quantities $\Re\braket{E_0|E_3}_\sE$ and $\Re\braket{E_1|E_2}_\sE$;
all the other parameters in the above expressions
($\bk{E_0}_\sE$, $\bk{E_1}_\sE$, $\bk{E_2}_\sE$, $\bk{E_3}_\sE$,
and $M$) are observable probabilities that appear in
Table~\ref{table_probs} and can be directly computed by Alice and Bob.

\begin{lemma}\label{lemma:constraint}
  The following constraint on Eve's quantum states holds:
  \begin{align}
&\Re\left(\braket{E_0|E_3}_\sE + \braket{E_1|E_2}_\sE\right)\nonumber\\
&\ge \frac{1}{2} p_{+,+} - p_{0,+} - p_{1,+}
- \frac{1}{4}(\pctrl{0}+\pctrl{1}) + \frac{1}{2} M \nonumber \\
&- \frac{1}{\sqrt{2}}\left(\sqrt{\pcreate{1}} + \sqrt{\pdouble}\right)
\left(\sqrt{\bk{E_0}_\sE} + \sqrt{\bk{E_2}_\sE}\right) \nonumber \\
&- \frac{1}{\sqrt{2}}\left(\sqrt{\pcreate{0}} + \sqrt{\pdouble}\right)
\left(\sqrt{\bk{E_1}_\sE} + \sqrt{\bk{E_3}_\sE}\right) \nonumber \\
&- \frac{1}{2}\left(\sqrt{\pcreate{0}} + \sqrt{\pdouble}\right)
\left(\sqrt{\pcreate{1}} + \sqrt{\pdouble}\right).
\end{align}
\end{lemma}
\begin{proof}
We expand Eq.~\eqref{eq:pr-pp} and substitute
Eqs.~\eqref{eq:p0+}--\eqref{eq:p1+}
and~\eqref{eq:pctrl0}--\eqref{eq:pctrl1}
(all appearing in Table~\ref{table_probs}) to find:
\begin{eqnarray}
p_{+,+} &=& \left|\ket{E_0}_\sE + \ket{E_2}_\sE
- \ket{g_0}_\sE + \ket{h_0}_\sE\right|^2\nonumber \\
&+& \left|\ket{E_1}_\sE + \ket{E_3}_\sE
- \ket{g_1}_\sE + \ket{h_1}_\sE\right|^2\nonumber \\
&+& 2 \Re\left[\left(\bra{E_0}_\sE + \bra{E_2}_\sE
- \bra{g_0}_\sE + \bra{h_0}_\sE\right)\right.\nonumber\\
&\times& \left.\left(\ket{E_1}_\sE + \ket{E_3}_\sE
- \ket{g_1}_\sE + \ket{h_1}_\sE\right)\right] \nonumber \\\nonumber\\
&=& \frac{1}{2} (\pctrl{0} + \pctrl{1}) \nonumber \\
&+& 2 \Re\left(\bra{E_0}_\sE + \bra{E_2}_\sE\right)
\left(\ket{E_1}_\sE + \ket{E_3}_\sE\right) \nonumber \\
&-& 2 \Re\left(\bra{E_0}_\sE + \bra{E_2}_\sE\right)
\left(\ket{g_1}_\sE - \ket{h_1}_\sE\right) \nonumber \\
&-& 2 \Re\left(\bra{g_0}_\sE - \bra{h_0}_\sE\right)
\left(\ket{E_1}_\sE + \ket{E_3}_\sE\right) \nonumber \\
&+& 2 \Re\left(\bra{g_0}_\sE - \bra{h_0}_\sE\right)
\left(\ket{g_1}_\sE - \ket{h_1}_\sE\right) \nonumber \\\nonumber\\
&=& \frac{1}{2} (\pctrl{0} + \pctrl{1}) \nonumber \\
&+& 2 p_{0,+} - \left(\bk{E_0}_\sE + \bk{E_1}_\sE\right)
+ 2 \Re \braket{E_0|E_3}_\sE \nonumber \\
&+& 2 p_{1,+} - \left(\bk{E_2}_\sE + \bk{E_3}_\sE\right)
+ 2 \Re \braket{E_1|E_2}_\sE \nonumber \\
&-& 2 \Re\left(\bra{E_0}_\sE + \bra{E_2}_\sE\right)
\left(\ket{g_1}_\sE - \ket{h_1}_\sE\right) \nonumber \\
&-& 2 \Re\left(\bra{g_0}_\sE - \bra{h_0}_\sE\right)
\left(\ket{E_1}_\sE + \ket{E_3}_\sE\right) \nonumber \\
&+& 2 \Re\left(\bra{g_0}_\sE - \bra{h_0}_\sE\right)
\left(\ket{g_1}_\sE - \ket{h_1}_\sE\right).
\end{eqnarray}

From this, we easily find (substituting Eq.~\eqref{eq:M},
which appears in Table~\ref{table_probs}):
\begin{align}
&\Re\left(\braket{E_0|E_3}_\sE + \braket{E_1|E_2}_\sE\right)\nonumber\\
&= \frac{1}{2} p_{+,+} - p_{0,+} - p_{1,+}
- \frac{1}{4} (\pctrl{0} + \pctrl{1}) + \frac{1}{2} M \nonumber \\
&+ \Re\left(\bra{g_1}_\sE - \bra{h_1}_\sE\right)
\left(\ket{E_0}_\sE + \ket{E_2}_\sE\right) \nonumber \\
&+ \Re\left(\bra{g_0}_\sE - \bra{h_0}_\sE\right)
\left(\ket{E_1}_\sE + \ket{E_3}_\sE\right) \nonumber \\
&- \Re\left(\bra{g_0}_\sE - \bra{h_0}_\sE\right)
\left(\ket{g_1}_\sE - \ket{h_1}_\sE\right).
\end{align}
The Cauchy-Schwarz inequality, Lemma~\ref{lemma},
and Eqs.~\eqref{eq:pcreate0}--\eqref{eq:pcreate1}
(all appearing in Table~\ref{table_probs})
complete the proof.
\end{proof}

Taken together, the above proof derives a lower bound on $S(A|E)$
for a raw-key generation round, and this bound is based only
on observable statistics from Table~\ref{table_probs}.
The Devetak-Winter key rate equation~\cite{DW05} (which says that the key rate
of a QKD protocol under collective attacks is the difference $S(A|E) - H(A|B)$)
then completes our proof of Theorem~\ref{thm:main-thm}.

To actually evaluate our bound on $S(A|E)$,
we will simply minimize Eq.~\eqref{eq:SAE-bound}
with respect to the condition outlined in Lemma~\ref{lemma:constraint}
and the following conditions (resulting from the Cauchy-Schwarz inequality):
\begin{eqnarray}
\left|\Re\braket{E_0|E_3}_\sE\right|
&\le& \sqrt{\bk{E_0}_\sE\cdot\bk{E_3}_\sE},\label{eq:E0E3-bound}\\
\left|\Re\braket{E_1|E_2}_\sE\right|
&\le& \sqrt{\bk{E_1}_\sE\cdot\bk{E_2}_\sE}.\label{eq:E1E2-bound}
\end{eqnarray}

In addition, we need to compute the expression $H(A|B)$ (needed in Theorem \ref{thm:main-thm}):
\begin{equation}
H(A|B) = H(AB) - H(B),
\end{equation}
where:
\begin{align}
& H(AB) \nonumber \\
&= H \left( \frac{\bk{E_0}_\sE}{M}, \frac{\bk{E_1}_\sE}{M},
\frac{\bk{E_2}_\sE}{M}, \frac{\bk{E_3}_\sE}{M} \right), \\
& H(B) \nonumber \\
&= H \left( \frac{\bk{E_0}_\sE + \bk{E_2}_\sE}{M},
\frac{\bk{E_1}_\sE + \bk{E_3}_\sE}{M} \right), \nonumber \\
& H(x_1, \ldots, x_k) \triangleq -\sum_{j=1}^k x_j \log_2(x_j).
\end{align}

\subsection{Algorithm for Computing the Key Rate}\label{subsec_algo}
The following algorithm allows us to compute the key rate for
any noise model and experimental data:

\begin{enumerate}
\item Estimate all probabilities and inner products
listed in Table~\ref{table_probs}.
(All these probabilities can be computed
by Alice and Bob in the classical post-processing stage.)
\item Compute the minimal value of the lower bound for $S(A|E)$
presented in Eq.~\eqref{eq:SAE-bound}, which is copied here:
\begin{eqnarray}
S(A|E) &\ge& \frac{\braket{E_0|E_0}_\sE + \braket{E_3|E_3}_\sE}{M}
\label{eq:SAE-bound_algo} \\
&\times&
\left[ H_2\left(\frac{\bk{E_0}_\sE}{\bk{E_0}_\sE + \bk{E_3}_\sE}\right)
- H_2(\lambda_1) \right]\nonumber\\
&+& \frac{\braket{E_1|E_1}_\sE + \braket{E_2|E_2}_\sE}{M}
\nonumber \\
&\times&
\left[ H_2\left(\frac{\bk{E_1}_\sE}{\bk{E_1}_\sE + \bk{E_2}_\sE}\right)
- H_2(\lambda_2) \right],\nonumber
\end{eqnarray}
where
\begin{eqnarray*}
\lambda_1 &\triangleq& \frac{1}{2} \\
&+& \frac{\sqrt{\left(\bk{E_0}_\sE - \bk{E_3}_\sE\right)^2
+ 4\Re^2\braket{E_0|E_3}_\sE}}
{2\left(\bk{E_0}_\sE+\bk{E_3}_\sE\right)},\\
\lambda_2 &\triangleq& \frac{1}{2} \\
&+& \frac{\sqrt{\left(\bk{E_1}_\sE - \bk{E_2}_\sE\right)^2
+ 4\Re^2\braket{E_1|E_2}_\sE}}
{2\left(\bk{E_1}_\sE+\bk{E_2}_\sE\right)},\\
H_2(x) &\triangleq& -x \log_2(x) - (1-x) \log_2(1-x),
\end{eqnarray*}
where the minimum is taken over $\Re \braket{E_0|E_3}_\sE$ and
$\Re \braket{E_1|E_2}_\sE$, subject to the three following constraints:
\begin{align}
&\Re\left(\braket{E_0|E_3}_\sE + \braket{E_1|E_2}_\sE\right)\nonumber\\
&\ge \frac{1}{2} p_{+,+} - p_{0,+} - p_{1,+}
- \frac{1}{4}(\pctrl{0}+\pctrl{1}) + \frac{1}{2} M \nonumber \\
&- \frac{1}{\sqrt{2}}\left(\sqrt{\pcreate{1}} + \sqrt{\pdouble}\right)\nonumber\\
&\times
\left(\sqrt{\bk{E_0}_\sE} + \sqrt{\bk{E_2}_\sE}\right) \nonumber \\
&- \frac{1}{\sqrt{2}}\left(\sqrt{\pcreate{0}} + \sqrt{\pdouble}\right)\nonumber\\
&\times\left(\sqrt{\bk{E_1}_\sE} + \sqrt{\bk{E_3}_\sE}\right) \nonumber \\
&- \frac{1}{2}\left(\sqrt{\pcreate{0}} + \sqrt{\pdouble}\right)\nonumber\\
&\times\left(\sqrt{\pcreate{1}} + \sqrt{\pdouble}\right),\label{eq:lbound} \\
&|\Re\braket{E_0|E_3}_\sE| \le \sqrt{\bk{E_0}_\sE\cdot\bk{E_3}_\sE}
\label{eq:csbound1},\\
&|\Re\braket{E_1|E_2}_\sE| \le \sqrt{\bk{E_1}_\sE\cdot\bk{E_2}_\sE}
\label{eq:csbound2}.
\end{align}
Note that we evaluate the minimum
because we assume the worst-case scenario---namely, that Eve chooses
her attack so as to minimize $S(A|E)$
(and, thus, minimize the key rate $r$).

In practice, we can minimize over a single parameter
(say, $\Re \braket{E_1|E_2}_\sE$), and take the other one
($\Re\braket{E_0|E_3}_\sE$) as the right-hand-side of
Eq.~\eqref{eq:lbound}, minus the free parameter
$\Re \braket{E_1|E_2}_\sE$ (but not less than $0$).
This will give us the minimum, because for any given value of
$\Re \braket{E_1|E_2}_\sE$, it is beneficial for Eve to have the
smallest possible (non-negative) value of $\Re\braket{E_0|E_3}_\sE$.

For our evaluations, we performed this minimization by simply discretizing
the search space and evaluating our bound on the entropy
at all points in the space for computing the minimum.
We also confirmed these results using Mathematica's \texttt{NMinimize} function.
\item Compute $H(A|B)$ using the observed parameters:
\begin{align}
& H(A|B) = H(AB) - H(B) \nonumber \\
&= H \left( \frac{\bk{E_0}_\sE}{M}, \frac{\bk{E_1}_\sE}{M},
\frac{\bk{E_2}_\sE}{M}, \frac{\bk{E_3}_\sE}{M} \right) \nonumber \\
&- H \left( \frac{\bk{E_0}_\sE + \bk{E_2}_\sE}{M},
\frac{\bk{E_1}_\sE + \bk{E_3}_\sE}{M} \right),\label{eq:HAB_algo}
\end{align}
where:
\begin{equation}
H(x_1, \ldots, x_k) \triangleq -\sum_{j=1}^k x_j \log_2(x_j).
\end{equation}
\item Find the final key rate expression, using the Devetak-Winter
key rate formula~\cite{DW05}:
\begin{equation}
r = S(A|E) - H(A|B).
\end{equation}
\end{enumerate}

This process is summarized in Algorithm~\ref{alg:1}.

\begin{algorithm}
  \KwIn{All observable probabilities listed in Table~\ref{table_probs}.}
  \KwOut{Lower bound on the key rate of the protocol.}

  Initialize the variable $\texttt{lowestAE} \leftarrow \infty$.\\
  Compute all probabilities listed in Table~\ref{table_probs}
  using the protocol's statistics observed by Alice and Bob.\\
  \tcc{Next, minimize $S(A|E)$ by minimizing
  the lower bound in Eq.~\eqref{eq:SAE-bound_algo}.}
  \For{all possible $\Re\braket{E_1|E_2}_\sE$ subject to Eq.~\eqref{eq:E1E2-bound}}
    {
      Compute a lower bound on $\Re\braket{E_0|E_3}_\sE$
      using Eq.~\eqref{eq:lbound} and subject to Eq.~\eqref{eq:E0E3-bound}.

      Compute a lower bound on $S(A|E)$ using Eq.~\eqref{eq:SAE-bound_algo}.

      If this determined bound is lower than the existing value
      of $\texttt{lowestAE}$, save it in $\texttt{lowestAE}$.
    }

    Compute $H(A|B)$ using Eq.~\eqref{eq:HAB_algo},
    and put the result in variable $\texttt{AB}$.

    \Return{the difference value $\texttt{lowestAE} - \texttt{AB}$}
    \caption{Compute a Lower Bound for
    $\texttt{rate} = S(A|E) - H(A|B)$.}\label{alg:1}
\end{algorithm}

\section{Examples}
The key rate bounds we found in Section~\ref{sec_proof}
work in a wide range of scenarios, and they can be evaluated
for all the possible values
of all probabilities in Table~\ref{table_probs}.
We would now like to evaluate our bounds for two concrete scenarios,
that are easily comparable with attacks
on other QKD and SQKD protocols.

\subsection{First Scenario: Single-Photon Attacks without Losses}
In the first scenario,
let us assume that Bob has a perfect qubit source
(no multi-photon pulses) and there are no photon losses.
Furthermore, let us assume that Eve does not perform
a multi-qubit attack at all (not even in her \textit{first} attack).
In this scenario, the only free parameters are the noises $Q_\sZ,Q_\sX$
in the channel: $Q_\sZ$ is the probability that a $\ket{0,1}_\sB$ state
is flipped into $\ket{1,0}_\sB$ (and vice versa) in ``raw key'' rounds,
and $Q_\sX$ is the probability that a $\ket{+}_\sB$ state is flipped
into $\ket{-}_\sB$ in ``test'' rounds.

We consider the following noise model:
\begin{itemize}
\item In the ``raw key'' rounds, we consider that \textit{both}
the forward channel (from Bob to Alice) and
the reverse channel (from Alice to Bob)
are depolarizing channels with error $Q_\sZ$, as follows:
\begin{equation}
\mathcal{E}_{Q_\sZ}(\rho) = (1 - 2 Q_\sZ) \rho
+ 2 Q_\sZ \cdot \frac{I_2}{2}.
\end{equation}
\item In the ``test'' rounds, we consider that the whole channel
(from Bob to Alice and back to Bob;
notice that Alice does nothing in such rounds)
is a depolarizing channel with error $Q_\sX$, as follows:
\begin{equation}
\mathcal{E}_{Q_\sX}(\rho) = (1 - 2 Q_\sX) \rho
+ 2 Q_\sX \cdot \frac{I_2}{2}.
\end{equation}
\end{itemize}

Here, in the forward attack, Eve always replaces Bob's original state
$\ket{0,1}_{\sx,\sB} \triangleq
\frac{\ket{0,1}_\sB + \ket{1,0}_\sB}{\sqrt{2}}$ by the following state
(a special case of Eq.~\eqref{eq:init-state}):
\begin{equation}
\ket{\psi_0} = \ket{0,1}_\sB \ket{e_{0,1}}_\sE
+ \ket{1,0}_\sB \ket{e_{1,0}}_\sE,
\end{equation}
with $\bk{e_{0,1}}_\sE = \bk{e_{1,0}}_\sE = \frac{1}{2}$.

\subsection{Second Scenario: Single-Photon Attacks \textit{with} Losses}
In the second scenario, our noise model remains identical
to the first scenario, except two modifications:
\begin{itemize}
\item In the forward channel (from Bob to Alice),
a loss occurs with probability $p_\ell^\sF$;
if it \textit{does not} occur, the original noise model is applied.
\item In the reverse channel (from Alice to Bob),
a loss occurs with probability $p_\ell^\sR$;
if it \textit{does not} occur, the original noise model is applied.
\end{itemize}
We assume, in particular, that a loss is \textit{final}:
if a loss occurs in the forward channel, no photon will ever
be observed in this round by either Alice or Bob.

\subsection{Evaluation Results}
In Table~\ref{table_examples} we evaluate all probabilities
in both scenarios.

\begin{table*}[htpb]
\caption{Computing all probabilities in
Table~\ref{table_probs} for both examples (both scenarios).}
\label{table_examples}
\centering
\begin{tabular}{rcc}
\hline\hline
\textbf{Probability} & \textbf{Single-Photon without Losses}
& \textbf{Single-Photon with Losses} \\ \hline
$\bk{E_0}_\sE = \bk{E_3}_\sE =$ & $\frac{1}{4} (1 - Q_\sZ)$
& $\frac{1}{4} (1 - p_\ell^\sF) (1 - p_\ell^\sR) (1 - Q_\sZ)$ \\
$\bk{E_1}_\sE = \bk{E_2}_\sE =$ & $\frac{1}{4} Q_\sZ$
& $\frac{1}{4} (1 - p_\ell^\sF) (1 - p_\ell^\sR) Q_\sZ$ \\ \hline
$M =$ & $\frac{1}{2}$ & $\frac{1}{2} (1 - p_\ell^\sF) (1 - p_\ell^\sR)$
\\ \hline
$p_{0,+} = p_{1,+} =$ & $\frac{1}{8}$
& $\frac{1}{8} (1 - p_\ell^\sF) (1 - p_\ell^\sR)$ \\ \hline
$p_{+,+} =$ & $1 - Q_\sX$
& $(1 - p_\ell^\sF) (1 - p_\ell^\sR) (1 - Q_\sX)$ \\ \hline
$\pctrl{0} = \pctrl{1} =$ & $\frac{1}{2}$
& $\frac{1}{2} (1 - p_\ell^\sF) (1 - p_\ell^\sR)$ \\ \hline
$\pdouble =$ & $0$ & $0$ \\ \hline
$\pcreate{0} = \pcreate{1} =$ & $0$ & $0$ \\
\hline\hline
\end{tabular}
\end{table*}

\paragraph{First scenario---single-photon attacks without losses}
Substituting the probabilities from Table~\ref{table_examples}
in Eqs.~\eqref{eq:lbound}--\eqref{eq:csbound2},
we find the three constraints to be:
\begin{eqnarray}
\Re\left(\braket{E_0|E_3}_\sE + \braket{E_1|E_2}_\sE\right) &\ge&
\frac{1}{4} - \frac{1}{2} Q_\sX, \\
\left|\Re\braket{E_0|E_3}_\sE\right| &\le& \frac{1}{4} (1 - Q_\sZ), \\
\left|\Re\braket{E_1|E_2}_\sE\right| &\le& \frac{1}{4} Q_\sZ.
\end{eqnarray}
As explained in Subsection~\ref{subsec_algo},
we numerically find the minimal value of
the key rate expression $r = S(A|E) - H(A|B)$
for various values of $Q_{\sZ,\sX}$
by using the lower bound on $S(A|E)$
presented in Eq.~\eqref{eq:SAE-bound_algo},
which is evaluated under the three above constraints
on the values of $\Re\braket{E_0|E_3}_\sE$
and $\Re\braket{E_1|E_2}_\sE$.
This numerical optimization yields the graph
shown in Fig.~\ref{fig:keyrate-perfect}, presenting two cases:
\begin{itemize}
\item In the \textit{dependent} noise model, where the error rates
$Q_\sX$ and $Q_\sZ$ are identical (namely, $Q_\sX = Q_\sZ$),
we recover the asymptotic BB84 noise tolerance of $11\%$.
\item In the \textit{independent} noise model,
where the two-way channel is modeled as two independent
depolarizing channels (namely, $Q_\sX = 2Q_\sZ(1 - Q_\sZ)$),
the maximal (asymptotic) noise tolerance is $7.9\%$.
\end{itemize}
Interestingly, both values agree with the values
found in~\cite{QKD-Tom-Krawec-Arbitrary} for
the original ``QKD with Classical Bob'' SQKD protocol~\cite{cbob07}.

In both scenarios, because the Mirror protocol is two-way,
we compare it to \textit{two} copies of BB84 performed from Alice to Bob;
this is a common comparison for two-way protocols
(see, for example,~\cite{renner-twoway13}).
The key rate of two copies of BB84 is $2 (1 - 2 H_2(p))$---namely,
twice the original key rate of BB84.

\begin{figure}
\includegraphics[width=0.5\textwidth]{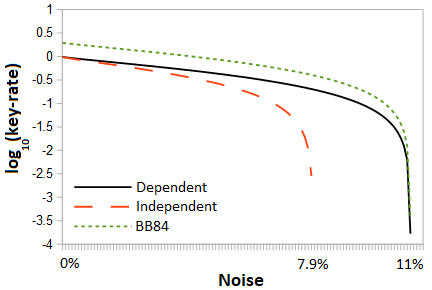}
\caption{A graph of the final key rate
versus the noise level of the Mirror protocol in
the first scenario (single-photon attacks without losses),
for dependent ($Q_\sX = Q_\sZ$)
and independent ($Q_\sX = 2Q_\sZ(1 - Q_\sZ)$) noise models,
compared to two copies of BB84.}\label{fig:keyrate-perfect}
\end{figure}

\paragraph{Second scenario---single-photon attacks with losses}
Substituting the probabilities from Table~\ref{table_examples}
in Eqs.~\eqref{eq:lbound}--\eqref{eq:csbound2},
we find the three constraints to be:
\begin{align}
&\Re\left(\braket{E_0|E_3}_\sE + \braket{E_1|E_2}_\sE\right)\nonumber\\
&\ge (1 - p_\ell^\sF) (1 - p_\ell^\sR)
\left(\frac{1}{4} - \frac{1}{2} Q_\sX\right), \\
&\left|\Re\braket{E_0|E_3}_\sE\right|\nonumber\\
&\le \frac{1}{4} (1 - p_\ell^\sF) (1 - p_\ell^\sR) (1 - Q_\sZ),
\end{align}
\begin{align}
&\left|\Re\braket{E_1|E_2}_\sE\right|\nonumber\\
&\le \frac{1}{4} (1 - p_\ell^\sF) (1 - p_\ell^\sR) Q_\sZ.
\end{align}
The numerical analysis for this scenario is similar
to the previous one.
However, here we must also model the loss rates,
so we consider a fiber channel with loss rates
$p_\ell^{\sF,\sR} = 1 - 10^{-\alpha \ell}$
(where $\alpha$ is the loss coefficient, and $\ell$ is measured in kilometers).
We consider two examples of fiber lengths:
$\ell = 10 \mathrm{km}$ and $\ell = 50 \mathrm{km}$.
Results are presented in Fig.~\ref{fig:keyrate-lossy}.

These evaluations lead to several observations---most notably, the observation
that the Mirror protocol is more sensitive to loss than BB84
even in the single photon case: increasing the fiber length
from $\ell = 10 \mathrm{km}$ to $\ell = 50 \mathrm{km}$
causes a significant drop in key rate.
We also note that the key rate of the Mirror protocol at only $10 \mathrm{km}$
coincides with that of BB84 at $50 \mathrm{km}$. This seems to indicate,
not surprisingly, that BB84 outperforms the Mirror protocol under loss.
There are many reasons for this. First, note that each photon in Mirror
travels twice the distance compared to BB84: while we are comparing Mirror
with two copies of BB84, these copies are treated independently and,
thus, for a single bit to be produced from these two copies, it is sufficient
for one of the photons to survive transmission without being lost
(over a fiber of length $\ell$). On the other hand, in the Mirror protocol,
the photon must travel through \textit{both} channels without loss
(a total fiber length of $2\ell$) for a single bit to be produced from a round.
Second, in Mirror Eve has two opportunities to attack, which gives her
a bigger attack strategy space for any given loss level.
Finally, our security analysis against loss may not be as tight as
our analysis against noise (where, as seen in Fig.~\ref{fig:keyrate-perfect},
the Mirror protocol performs similarly to BB84 under a lossless but noisy
channel). This is, to our knowledge, the first security proof for the Mirror
protocol against loss, and future improvements may exist.

\begin{figure}
\includegraphics[width=0.5\textwidth]{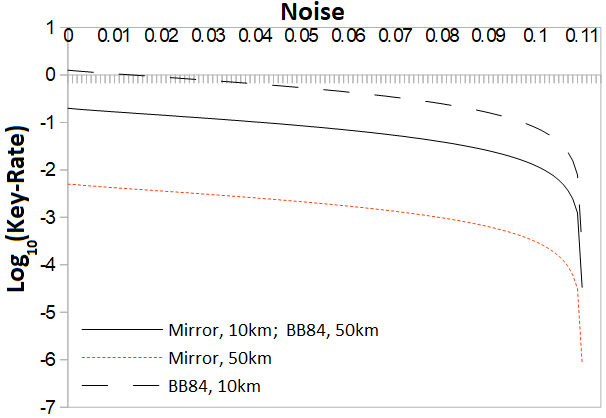}
\caption{A graph of the final key rate
versus the noise level of the Mirror protocol
in the second scenario (single-photon attacks \textit{with} losses),
compared to two copies of BB84,
for two possible lengths of fiber channels
($\ell = 10 \mathrm{km}$ and $\ell = 50 \mathrm{km}$) and
$\alpha = 0.2\frac{\mathrm{dB}}{\mathrm{km}}$.
Note that this figure presents the effective key rate
computed by the expression $r = S(A|E) - H(A|B)$, which scales
with the probability of a raw key bit being generated.
Also note that the key rate of BB84 at $50 \mathrm{km}$ coincides with that
of the Mirror protocol at $10 \mathrm{km}$,
so both are plotted as the same (solid) line.}\label{fig:keyrate-lossy}
\end{figure}

\section{Summary}
We have proved security of the Mirror protocol
against collective attacks, including attacks where
the adversary Eve sends multiple photons towards
the classical user (Alice). Our analysis shows that
the asymptotic noise tolerance of the Mirror protocol is comparable,
in the single-photon scenario, to the ``QKD with Classical Bob''
protocol~\cite{cbob07,QKD-Tom-Krawec-Arbitrary}
and even to the BB84 protocol.
Moreover, we have suggested a general framework for analyzing
multi-photon attacks; this framework may be useful
for other QKD and SQKD protocols, too.

We conclude the Mirror protocol is theoretically
secure against collective attacks, and we suspect
similar security results can be achieved for general attacks.
Extensions of our results, such as security against general attacks,
security against multi-photon attacks on both channels,
and evaluation of our key rate formula in the multi-photon case,
are left for future research.
Our extension to multi-photon attacks also suggests the intriguing
possibility of analyzing SQKD protocols employing decoy states and
similar counter-measures against practical attacks.

Our results show that SQKD protocols can potentially be implemented
in a secure way, overcoming the practical attacks suggested
by~\cite{cbob07comment,cbob07comment_reply}. They therefore hold
the potential to transform the SQKD protocols, making them not only
theoretically fascinating, but also practically secure.

\bibliographystyle{unsrturl}
\bibliography{mirror}

\EOD
\end{document}